\definecolor{purple}{rgb}{.5,0,1}
\definecolor{orange}{rgb}{1,.5,0}
\definecolor{pink}{rgb}{1,0,.5}
\definecolor{green}{rgb}{0,.5,0}
\definecolor{gold}{rgb}{1,.623,0}
\def\trv#1{\textcolor{blue}{#1}}
\date{}
\newcommand{\A}{{\mathcal A}}
\newcommand{\Z}{{\mathbbm Z}}
\newcommand{\R}{{\mathbbm R}}
\newcommand{\C}{{\mathbbm C}}
\newcommand{\N}{{\mathbbm N}}
\newcommand{\T}{{\mathbbm T}}
\newcommand{\Q}{{\mathbbm Q}}
\newcommand{\D}{{\mathbbm D}}
\newcommand{\U}{{\mathbbm U}}
\newcommand{\CE}{{\mathcal E}}
\newcommand{\e}{{\varepsilon}}
\newcommand{\CL}{{\mathcal{L}}}
\newcommand{\CM}{{\mathcal{M}}}
\newcommand{\CZ}{{\mathcal{Z}}}
\newcommand{\pD}{{\partial \D}}
\newcommand{\Arg}{{\mathrm{Arg}}}
\newcommand{\Id}{{\mathbbm{I}}}
\newcommand{\Int}{{\mathrm{int}}}
\newcommand{\dd}{{\mathrm{d}}}
\newcommand{\ac}{{\mathrm{ac}}}
\newcommand{\SU}{{\mathbbm{SU}}}
\newcommand{\Hd}{{\mathrm{H}}}
\DeclareMathOperator*{\slim}{s-lim}
\newtheorem{theorem}{Theorem}[section]
\newtheorem{lemma}[theorem]{Lemma}
\newtheorem{prop}[theorem]{Proposition}
\newtheorem{coro}[theorem]{Corollary}
\theoremstyle{definition}
\newtheorem*{remark*}{Remark}
\theoremstyle{definition}
\theoremstyle{definition}
\theoremstyle{definition}
\numberwithin{equation}{section}
\renewcommand{\Im}{\mathrm{Im}}
\renewcommand{\Re}{\mathrm{Re}}
\newcommand{\tr}{\mathrm{tr} }
\newcommand{\Leb}{{\mathrm{Leb}}}
\newcommand{\set}[1]{\left\{#1\right\}}
\newcommand{\eqdef}{\overset{\mathrm{def}}=}
\begin{document}

\title[Spectral Approximation for CMV Matrices]{Spectral Approximation for Ergodic CMV Operators with an Application to Quantum Walks}

\author{Jake Fillman}
\address{Virginia Tech\\
Mathematics (MC0123) \\
225 Stanger Street \\
Blacksburg, VA 24061 \\
USA}
\email{fillman@vt.edu}

\author{Darren C. Ong}
\address{Xiamen University Malaysia,\\
Jalan Sunsuria, Bandar Sunsuria,\\
43900 Sepang, Selangor Darul Ehsan,\\
Malaysia}
\email{darrenong@xmu.edu.my}

\author{Tom VandenBoom}
\address{Rice University\\
6100 Main St., MS-136\\
Houston, TX 77005\\
USA}
\email{tv4@rice.edu}

\maketitle

\begin{abstract}
We establish concrete criteria for fully supported absolutely continuous spectrum for ergodic CMV matrices and purely absolutely continuous spectrum for limit-periodic CMV matrices. We proceed by proving several variational estimates on the measure of the spectrum and the vanishing set of the Lyapunov exponent for CMV matrices, which represent CMV analogues of results obtained for Schr\"odinger operators due to Y.\ Last in the early 1990s. Having done so, we combine those estimates with results from inverse spectral theory to obtain purely absolutely continuous spectrum.
\end{abstract}

\section{Introduction}

The motivation for this paper came about as one of the authors was writing \cite{F2017CMP}; there had been a substantial amount of recent activity on the connection between various versions of ballistic motion and purely absolutely continuous (a.c.) spectrum for 1-dimensional operators (Schr\"odinger, Jacobi, and CMV) \cite{AK98, DLY2015CMP, F2017CMP, Kachkovskiy2016CMP, Zhao2017JDE, ZhangZhao2017CMP}. Since the methods of \cite{F2017CMP} produce ballistic motion for limit-periodic CMV matrices satisfying a Pastur--Tkachenko-like condition, we wanted to verify that such CMV matrices indeed had purely a.c.\ spectrum. Along the way, we realized that the proof of a.c.\ spectrum could be accomplished in a rather elegant manner by making use of spectral approximation results analogous to those known for Schr\"odinger operators, but which were as-yet unknown for CMV matrices. The main aim of this paper is to work out the appropriate CMV analogues of those approximation results and use them to deduce purely a.c.\ spectrum in the CMV analogue of the Pastur--Tkachenko class. In this application, we know that the spectrum is homogeneous in the Carleson sense, which is how we are able to deduce pure a.c.\ spectrum. In general, one needs to know at least that the spectrum has positive Lebesgue measure in order to deduce nontrivial conclusions from the approximation results, so, as a supplement to our work, we also establish a criterion to guarantee positive-measure spectrum for limit-periodic CMV matrices. To the best of our knowledge, this condition that ensures positive-measure spectrum is new, even for Schr\"odinger or Jacobi operators.

An \emph{extended CMV matrix} is a pentadiagonal unitary operator on $\ell^2(\Z)$ with a repeating $2 \times 4$ block structure of the form
\begin{equation} \label{eq:stdcmvdef}
\CE
=
\begin{bmatrix}
\ddots & \ddots & \ddots & \ddots &&&&& \\
& \overline{\alpha_0}\rho_{-1} 
& -\overline{\alpha_0}\alpha_{-1} 
& \overline{\alpha_1} \rho_0 
& \rho_1\rho_0 &&&\\
& \rho_0\rho_{-1} 
& -\rho_0\alpha_{-1} 
& -\overline{\alpha_1}\alpha_0 
& -\rho_1\alpha_0 & &&\\
&&& \overline{\alpha_2}\rho_1 
& -\overline{\alpha_2}\alpha_1 
& \overline{\alpha_3} \rho_2 
& \rho_3\rho_2 &\\
&&& \rho_2\rho_1 
& -\rho_2\alpha_1 
& -\overline{\alpha_3}\alpha_2 
& -\rho_3\alpha_2 & \\
&&&& \ddots & \ddots & \ddots & \ddots
\end{bmatrix},
\end{equation}
where $\alpha_n \in \D \eqdef \{ z \in \C : |z| < 1\}$ and $\rho_n \eqdef \sqrt{1-|\alpha_n|^2}$ for all $n \in \Z$. Such operators arise naturally in several settings, including quantum walks in one dimension \cite{CGMV} and the one-dimensional ferromagnetic Ising model \cite{DFLY2015IMRN, DMY2013JSP}. Moreover, half-line CMV matrices, obtained by setting $\alpha_{-1} = -1$ and restricting to $\ell^2(\Z_+)$ arise naturally in the study of orthogonal polynomials on the unit circle (OPUC) \cite{S1, S2} and moreover are universal within the class of unitary operators having a cyclic vector (in the sense that any unitary operator with a cyclic vector is unitarily equivalent to a half-line CMV matrix).

We will be particularly interested in the case in which the coefficients of $\CE$ are generated by an underlying dynamical system. Concretely, given a Borel probability space $(\Omega,\mu)$, a measurable $\mu$-ergodic map $T:\Omega \to \Omega$ (with measurable inverse), and a measurable map $f:\Omega \to \D$, one can consider $\CE^\omega$ for $\omega \in \Omega$ with coefficients given by
\[
\alpha_n^\omega 
:= 
f(T^n\omega),
\quad
n \in \Z.
\]
One can then study $\{ \CE^\omega\}_{\omega \in \Omega}$ as a family and, by so doing, leverage tools and techniques from dynamical systems to prove statements about $\CE^\omega$ for $\mu$-almost every $\omega$ (or even every $\omega$ in some situations). This scheme subsumes many particular cases under one umbrella, such as those for which $\alpha_n$ is almost-periodic (in the sense of Bohr or Bochner). In that case, $\Omega$ is a compact monothetic group and $T$ is a translation by a topological generator of $\Omega$. Some common examples of almost-periodic operators include:
\begin{itemize}
\item $\Omega = \Z_p$, $T:\omega\mapsto\omega+1$: The resulting operators are \emph{periodic}, and one can compute spectra and spectral data more-or-less explicitly \cite[Chapter~11]{S2}.

\item $\Omega = \T^d := \R^d / \Z^d$, $T:\omega \mapsto \omega+\beta$, where $\{1,\beta_1,\beta_2,\ldots,\beta_d\}$ is linearly independent over $\Q$, and $f$ is continuous. The resulting operators are \emph{quasi-periodic}.

\item $\Omega$ is a procyclic group (e.g.\ $\Omega = \mathbbm{J}_p$, the $p$-adic integers for $p$ prime). The resulting operators are \emph{limit-periodic} (for more on the connection between limit-periodic operators and procyclic groups, see~\cite{Avila2009CMP,Gan2010}).

\end{itemize}

The \emph{Lyapunov exponent}, 
\begin{equation} \label{eq:lyapdef}
L(z)
:=
\lim_{n \to \infty} \frac{1}{n} \int_\Omega \log \| S(f(T^{n-1} \omega),z) \cdots S(f(T\omega),z) S(f(\omega),z) \| \, \dd \mu(\omega),
\quad
z \in \pD
\end{equation}
plays a key role in the spectral analysis of the family $\{\CE^\omega\}_{\omega \in \Omega}$. Here, $S$ denotes the so-called \emph{Szeg\H{o} transfer matrix}, defined by
\begin{equation} \label{eq:szegotmdef}
S(\alpha,z)
:=
\frac{1}{\sqrt{1-|\alpha|^2}}
\begin{bmatrix}
z & - \overline{\alpha} \\
-z\alpha & 1
\end{bmatrix},
\quad
\alpha \in \D, \; z \in \C.
\end{equation}
Because the matrices so defined satisfy $|\det S(\omega,z)| = 1$ for $z \in \pD$, we have $L(z) \ge 0$ for $z \in \pD$. We define
\[
\CZ
=
\set{z \in \pD : L(z) = 0}.
\]
By general considerations, there is a compact set $\Sigma = \Sigma(f)$ with $\sigma(\CE^\omega) = \Sigma$ for $\mu$-a.e.\ $\omega \in \Omega$. Moreover, by the Combes--Thomas estimate, one has $\CZ \subseteq \Sigma$. We are interested in spectral approximation of CMV operators, so the following questions are natural:
\begin{itemize}
\item To what extent does $\CZ$ approximate $\Sigma$?
\item The same question, but with $\Sigma$ replaced by spectra of suitable periodic approximants of $\CE^\omega$.
\end{itemize}
In particular, helpful answers to both questions can allow one to bootstrap information about spectra of periodic approximants into information about the spectra of more exotic ergodic operators, using $\CZ$ as an intermediary.

\subsection{Results}

The set $\CZ$ is well-approximated by spectra of periodic approximants. Concretely, we define $\widetilde{\CE}^{\omega,q}$ for $\omega \in \Omega$ and $q \in \Z_+$ by
\begin{equation}\label{eq:widetildealpha}
\widetilde\alpha_{j + nq}^{\omega,q}
=
\alpha_j^\omega
\text{ for all } 0 \le j \le q-1 \text{ and } n \in \Z.
\end{equation}
We then denote $\Sigma^{\omega,q} = \sigma(\widetilde{\CE}^{\omega,q})$.

\begin{theorem} \label{t:perapprox}
For $\mu$-almost every $\omega \in\Omega$, we have
\[
\Leb\left( \limsup_{q \to \infty} \Sigma^{\omega, q}\setminus \CZ \right)
=
0.
\]
\end{theorem}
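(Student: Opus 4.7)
The plan is to combine a transfer-matrix criterion for $\Sigma^{\omega,q}$ with a CMV analogue of Herman's subharmonic trick. First, the spectrum of the $q$-periodic approximant is controlled by the monodromy
\[
M_q^\omega(z) := S(f(T^{q-1}\omega),z) \cdots S(f(\omega),z).
\]
Since $\det S(\alpha, z) = z$ (a direct computation), $\det M_q^\omega(z) = z^q$ has unit modulus on $\pD$, so membership $z \in \Sigma^{\omega,q}$ is equivalent to $M_q^\omega(z)$ being elliptic, that is, $\spr(M_q^\omega(z)) = 1$ (otherwise one eigenvalue has modulus strictly greater than $1$ and the other strictly less, precluding a bounded Bloch solution). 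In particular, for any $z \in \Sigma^{\omega,q}$,
\[
|\tr M_q^\omega(z)| \;\le\; 2\, \spr(M_q^\omega(z)) \;=\; 2.
\]

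The core step is a Herman-type limit theorem: for $\mu$-a.e.\ $\omega \in \Omega$,
\[
\frac{1}{q}\log|\tr M_q^\omega(z)| \;\xrightarrow[q\to\infty]{}\; L(z) \qquad \text{for Lebesgue-a.e.\ } z \in \pD.
\]
The upper bound $\limsup_q q^{-1}\log|\tr M_q^\omega(z)| \le L(z)$ follows from $|\tr A| \le 2\|A\|$ combined with Kingman's subadditive ergodic theorem for $\log\|M_q^\omega(z)\|$ at fixed $z$, moved to a.e.\ $z$ by Fubini. For the matching lower bound, each entry of $M_q^\omega(\cdot)$ is a polynomial in $z$, and a short induction from \eqref{eq:szegotmdef} gives $\tr M_q^\omega(0) = \prod_{j=0}^{q-1}(\rho_j^\omega)^{-1}$. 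Jensen's formula for the polynomial $\tr M_q^\omega(\cdot)$ on $\D$ then yields
\[
\int_{\pD} \log|\tr M_q^\omega(z)|\,\frac{d\theta}{2\pi} \;\ge\; \log|\tr M_q^\omega(0)| \;=\; -\sum_{j=0}^{q-1}\log \rho_j^\omega,
\]
and Birkhoff's theorem shows the right-hand side, divided by $q$, converges to $-\int_\Omega \log\rho\,d\mu$. By the OPUC Thouless formula this equals $\int_{\pD} L(z)\,d\theta/(2\pi)$, so sandwiching with the pointwise upper bound forces $L^1$-convergence on $\pD$, and a standard subharmonic-regularization argument upgrades this to pointwise a.e.\ convergence.

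Given the Herman limit, the theorem is immediate: fix any $\omega$ for which the convergence above holds Lebesgue-a.e.\ on $\pD$, and let $z \in \pD$ be a convergence point with $L(z) > 0$. Then $|\tr M_q^\omega(z)| \ge e^{qL(z)/2}$ for all large $q$, eventually exceeding $2$; the trace inequality from the first step forces $z \notin \Sigma^{\omega,q}$ for all such $q$, so $z \notin \limsup_q \Sigma^{\omega,q}$. Therefore $\limsup_q \Sigma^{\omega,q}\setminus \CZ$ lies in the Lebesgue-null exceptional set from the Herman step, proving the claim. \textbf{The main obstacle} is executing Herman's subharmonic trick in the CMV setting: identifying $\tr M_q^\omega(0)$ explicitly, invoking the Thouless identity $\int L\,d\theta/(2\pi) = -\int \log\rho\,d\mu$, arranging the Szegő-type integrability $\log\rho \in L^1(\mu)$ needed for $L$ to be finite, and carrying out the standard but delicate $L^1$-to-a.e.\ upgrade for subharmonic functions.
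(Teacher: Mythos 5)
Your reduction of the theorem to a trace estimate is sound as far as it goes: on $\Sigma^{\omega,q}$ both eigenvalues of $M_q^\omega(z)$ lie on the unit circle (since $|\det M_q^\omega(z)|=|z|^q=1$ and uniform hyperbolicity fails on the spectrum), so $|\tr M_q^\omega(z)|\le 2$ there; and the individual computations you cite ($\det S(\alpha,z)=z$, $\tr M_q^\omega(0)=\prod_j\rho_j^{-1}$, Jensen's inequality, the averaged Thouless identity $\frac{1}{2\pi}\int_{\pD}L\,\dd\theta=-\int_\Omega\log\rho\,\dd\mu$) are all correct. The genuine gap is the final step of the Herman argument. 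What Jensen's formula together with the pointwise upper bound $\limsup_q q^{-1}\log|\tr M_q^\omega(z)|\le L(z)$ actually delivers (via reverse Fatou and the squeeze on the integrals) is convergence of $q^{-1}\log|\tr M_q^\omega(\cdot)|$ to $L$ in $L^1(\pD)$, hence pointwise a.e.\ convergence only along a subsequence. But your conclusion requires that for a.e.\ $z$ with $L(z)>0$ one has $|\tr M_q^\omega(z)|>2$ for \emph{all} sufficiently large $q$ --- a $\liminf$ bound along the full sequence --- because $\limsup_q\Sigma^{\omega,q}$ records membership in infinitely many $q$, not just those in a chosen subsequence. The ``standard subharmonic-regularization argument'' you invoke does not close this: upper-envelope and Hartogs-type results for subharmonic functions prevent the $u_q$ from being too \emph{large} on small sets, but give no pointwise lower bound. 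Indeed $q^{-1}\log|\tr M_q^\omega(z)|$ plunges to $-\infty$ at the roughly $q$ zeros of the trace polynomial, and nothing in your argument rules out the small-value sets $\{z\in\pD: |\tr M_q^\omega(z)|\le 2\}$ sweeping over almost every point of $\{L>0\}$ infinitely often. Cartan-type small-value estimates do not rescue this either, since they bound the trace from below relative to the circle average of $L$ rather than the pointwise value $L(z)$, which is exactly the wrong quantity where $L(z)$ is small but positive.

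For comparison, the paper avoids this issue entirely by following Last's 1993 argument rather than Herman's: Lemma~\ref{l:monodromyNorm} bounds the monodromy norm by $4q|\dd z/\dd k|^{-1}$ in band interiors, the set of $z\in\Sigma^{\omega,q}$ where $|\dd z/\dd k|<q^{-2}$ has measure $O(q^{-2})$ and is discarded by Borel--Cantelli, and on the complementary set the multiplicative ergodic theorem (moved to a.e.\ $z$ by Fubini) forces $L(z)=0$. If you wish to salvage your route, you need a lower bound on $|\tr M_q^\omega(z)|$ valid for \emph{every} large $q$ off an exceptional set of summable measure; this quantitative control is precisely what the band-derivative estimate supplies and what the $L^1$ statement lacks.
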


In fact, we only ever work with periodic approximants of even periods, so we prove the slightly stronger statement
\[
\Leb\left( \limsup_{q\to\infty} \Sigma^{\omega,2q}\setminus \CZ \right)
=
0.
\]
\medskip

Next, we discuss some applications to limit-periodic CMV matrices. We say that a CMV matrix is \emph{periodic} if there exists $q \in \Z_+$ with the property that $\alpha_{n+q} \equiv \alpha_n$ for all $n$; a \emph{limit-periodic} CMV is one which may be obtained as an operator-norm limit of periodic CMV matrices, i.e., $\CE$ is limit-periodic if there exist periodic CMV matrices $\CE_1$, $\CE_2$, \dots with
\begin{equation} \label{eq:lpCMVdef}
\lim_{n \to \infty} \| \CE_n - \CE \|
=
0.
\end{equation}
When $\CE$ is limit-periodic with periodic approximants $\CE_n$, we will write $\Sigma = \sigma(\CE)$ and $\Sigma_n = \sigma(\CE_n)$. Our next result applies the foregoing results to deduce a concrete criterion for positive-measure a.c. spectrum.

\begin{theorem}
\label{t:LPposmeas}
Let $\CE$ be a limit-periodic CMV matrix with Verblunsky coefficients uniformly bounded away from the circle $\|\alpha\|_\infty < 1$ and $q_n$-periodic approximants $\CE_n$ having Verblunsky coefficients $\alpha^{(n)}$, and suppose
\begin{equation} \label{eq:alphao(q)}
\lim_{n\to\infty} q_n\|\alpha^{(n)} - \alpha\|_\infty 
=
0.
\end{equation}
Then,
\begin{align}
\label{eq:fullZ}
\Leb(\Sigma \setminus \CZ) = 0.
\end{align}
Additionally, if
\begin{align}
\label{eq:LPsumbd}
\sum_{n=k+1}^\infty q_n\|\CE_n - \CE_{n-1}\| < \frac{1}{2} \Leb(\Sigma_k)
\end{align}
for some $k$, then \eqref{eq:alphao(q)} and \eqref{eq:fullZ} hold and we have $\Leb(\Sigma) > 0$.
\end{theorem}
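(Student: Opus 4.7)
The proof breaks naturally into two parts: first, establishing \eqref{eq:fullZ} under the hypothesis \eqref{eq:alphao(q)}; second, showing that \eqref{eq:LPsumbd} implies \eqref{eq:alphao(q)} together with the positivity $\Leb(\Sigma)>0$. Throughout, the key engine is Theorem~\ref{t:perapprox}, supplemented by a band-counting estimate controlling the Lebesgue-measure symmetric difference between the spectra of nearby periodic CMV matrices.

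For the first part, realize $\CE$ as $\CE^{\omega_0}$ for some $\omega_0$ in the procyclic hull associated to the limit-periodic structure, and consider the cut-and-repeat periodic approximants $\widetilde{\CE}^{(n)} := \widetilde{\CE}^{\omega_0, q_n}$ defined by \eqref{eq:widetildealpha}. Theorem~\ref{t:perapprox}, combined with the $\omega$-independence of $\CZ$ under minimal procyclic dynamics, delivers $\Leb(\limsup_n \sigma(\widetilde{\CE}^{(n)}) \setminus \CZ) = 0$. I then aim to show $\Leb(\Sigma \setminus \sigma(\widetilde{\CE}^{(n)})) \to 0$, comparing $\sigma(\widetilde{\CE}^{(n)})$ to $\Sigma$ via $\Sigma_n$. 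Both $\widetilde{\CE}^{(n)}$ and $\CE_n$ are $q_n$-periodic, so their spectra are unions of at most $q_n$ arcs; a norm perturbation of size $\delta$ shifts each band endpoint by at most $\delta$, and each of the at most $q_n$ gaps can absorb at most $2\delta$ worth of the other spectrum, yielding the key bound
\[
\Leb\bigl(\sigma(\CE')\triangle\sigma(\CE'')\bigr) \le 2q\,\|\CE'-\CE''\|
\]
for $q$-periodic $\CE',\CE''$. Combined with $\|\CE - \CE_n\|, \|\CE_n - \widetilde\CE^{(n)}\| \le C\|\alpha^{(n)}-\alpha\|_\infty$ (with $C$ uniform thanks to $\|\alpha\|_\infty<1$) and the hypothesis $q_n\|\alpha^{(n)}-\alpha\|_\infty \to 0$, this forces $\Leb(\Sigma \triangle \sigma(\widetilde{\CE}^{(n)})) \to 0$. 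Fatou's lemma applied to complements then yields $\Leb(\Sigma \setminus \limsup_n \sigma(\widetilde{\CE}^{(n)})) = 0$, and combining with Theorem~\ref{t:perapprox} gives \eqref{eq:fullZ}.

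For the second part, after the standard reduction to a nested sequence $q_{n-1} \mid q_n$, the telescoping identity $\alpha - \alpha^{(n)} = \sum_{m>n}(\alpha^{(m)} - \alpha^{(m-1)})$ gives
\[
q_n\|\alpha^{(n)} - \alpha\|_\infty \le \sum_{m>n} q_m\|\alpha^{(m)} - \alpha^{(m-1)}\|_\infty,
\]
which tends to zero as a tail of the summable series bounded in \eqref{eq:LPsumbd} (noting $\|\CE_m-\CE_{m-1}\|$ and $\|\alpha^{(m)}-\alpha^{(m-1)}\|_\infty$ are comparable). Thus \eqref{eq:alphao(q)}, and hence \eqref{eq:fullZ}, hold. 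For the strict positivity $\Leb(\Sigma)>0$, norm convergence $\CE_n \to \CE$ forces any $z \in \Sigma_k \setminus \Sigma$ to eventually leave the sequence $(\Sigma_n)$, producing the chain inclusion $\Sigma_k \setminus \Sigma \subseteq \bigcup_{n>k}(\Sigma_{n-1}\setminus\Sigma_n)$. The band-counting estimate applied to consecutive pairs combined with \eqref{eq:LPsumbd} then gives
\[
\Leb(\Sigma_k\setminus\Sigma) \le 2\sum_{n>k} q_n\,\|\CE_n - \CE_{n-1}\| < \Leb(\Sigma_k),
\]
from which $\Leb(\Sigma) \ge \Leb(\Sigma_k) - \Leb(\Sigma_k\setminus\Sigma) > 0$.

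The principal technical hurdle is proving the band-counting estimate $\Leb(\sigma(\CE')\triangle\sigma(\CE'')) \le 2q\|\CE'-\CE''\|$ for $q$-periodic CMV matrices, which requires analyzing the band structure via the discriminant of a $q$-periodic operator and tracking how band endpoints move under perturbation. A secondary subtlety is upgrading the $\mu$-a.e.\ conclusion of Theorem~\ref{t:perapprox} to apply at the specific $\omega_0$ realizing $\CE = \CE^{\omega_0}$; this is handled by exploiting the $\omega$-independence of $\CZ$ and the minimality of the procyclic hull action.
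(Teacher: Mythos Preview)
Your proposal is correct and follows essentially the same route as the paper: use a band-counting estimate to show $\Leb(\Sigma \setminus \Sigma^{\omega,q_n}) \to 0$, then invoke Theorem~\ref{t:perapprox} for \eqref{eq:fullZ}; telescope for \eqref{eq:alphao(q)}; and compare successive periodic spectra for positivity.

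Two technical points where the paper's execution is cleaner than yours. First, what you flag as the ``principal technical hurdle'' (the band-counting estimate via discriminant analysis and tracking band-endpoint motion) is in fact immediate: from the general fact $d_\Hd(\sigma(U),\sigma(V)) \le \|U-V\|$ for unitaries one gets $\sigma(\CE') \subset B_\delta(\sigma(\CE''))$ with $\delta = \|\CE'-\CE''\|$, and since a $q$-periodic spectrum is a union of at most $q$ arcs, $B_\delta(\sigma(\CE''))\setminus\sigma(\CE'')$ has at most $2q$ components each of arclength at most $\delta$. No discriminant analysis is needed. Second, you assert the two-sided comparability of $\|\CE_m - \CE_{m-1}\|$ and $\|\alpha^{(m)} - \alpha^{(m-1)}\|_\infty$; the upper bound is standard, but the lower bound (needed to pass from \eqref{eq:LPsumbd} to \eqref{eq:alphao(q)}) is less obvious for general CMV matrices. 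The paper handles this by first \emph{sieving}: replacing $\CE$ by $\widehat\CE$ with $\widehat\alpha_{2j}=0$, $\widehat\alpha_{2j-1}=\alpha_j$, which preserves all the relevant spectral quantities and makes the lower bound $\|\alpha-\alpha'\|_\infty \le \|\CE-\CE'\|$ transparent (since then $\CL=\CL'$ and $\|\CE-\CE'\| = \|\CM-\CM'\| = \sup_j\|\Theta(\alpha_{2j-1})-\Theta(\alpha'_{2j-1})\|$). Regarding the $\mu$-a.e.\ issue you raise at the end: the paper's device is simply to observe that the estimate $\Leb(\Sigma\setminus\Sigma^{\omega,q_n}) \le Cq_n\delta_n$ holds for \emph{every} $\omega$ (by shifting), so that combining with Theorem~\ref{t:perapprox} yields $\Leb(\Sigma\setminus\CZ)=0$ as an $\omega$-independent statement that holds for a.e.\ $\omega$, hence holds outright.
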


Let us point out that the analogue for this criterion is new even for discrete Schr\"odinger operators, and substantially less restrictive than that of \cite[Theorem~4.ii]{Last1992CMP}. In fact, the criterion of \cite{Last1992CMP} implies not just positive measure, but \emph{homogeneity} of the spectrum \cite{FilLuk2017JST}. Once we put everything together, we will see that homogeneity plus variational estimates yield pure absolutely continuity of the spectra of limit-periodic CMV matrices as long as the amplitudes of the low-frequency modes decay sufficiently rapidly. To make things more specific, given a monotone decreasing function $\phi: \R_+ \to \R_+$ with $\phi(n) \to 0$ as $n \to \infty$, let us say that $\CE$ is limit-periodic with \emph{rate function} $\phi$ if the limit on the left-hand side of \eqref{eq:lpCMVdef} converges like $\phi(q_n)$, i.e.
\[
\| \CE_n -\CE\|
\leq
\phi(q_n).
\]
Common choices for $\phi$ include polynomial decay ($\phi(x) = x^{-p}$, $p>0$) and exponential decay ($\phi(x) = e^{-ax}$, $a>0$).

In general, it is a very interesting question to what extent one can characterize the ``phase transition'' undergone by limit-periodic operators. Concretely, as the rate of approximation by periodic operators (quantified by $\phi$) grows slower, one can obtain increasing singularity of the spectrum and spectral measures: e.g.\ purely singular continuous spectrum on a set of zero Hausdorff dimension \cite{Avila2009CMP} or even purely pure point spectrum (in the discrete Schr\"odinger operator setting) \cite{DamGor2016AdvM}. Moreover, many of the phenomena for limit-periodic operators (considered as a class of operators) seem to be somewhat universal; for example, purely singular continuous spectrum supported on a set of Hausdorff dimension zero occurs for discrete Schr\"odinger operators \cite{Avila2009CMP}, CMV matrices \cite{FilOng2017JFA} and continuum Schr\"odinger operators \cite{DamFilLuk2042JST}.

Combining Theorem \ref{t:LPposmeas} with some inverse spectral results and homogeneity of the spectrum, we can prove purely absolutely continuous spectrum for the Pastur--Tkachenko class. Specifically, we say that $\CE$ is in the Pastur--Tkachenko class if its Verblunsky coefficients are bounded uniformly away from $\pD$ and it admits $q_n$-periodic approximants $\CE_n$ with
\[
\lim_{n\to\infty} e^{\eta q_{n+1}} \|\alpha^{(n)} - \alpha\|_\infty
=
0
\text{ for every } \eta  > 0.
\]
\begin{theorem} \label{t:PTac}
Any $\CE$ in the Pastur--Tkachenko class has purely absolutely continuous spectrum.
\end{theorem}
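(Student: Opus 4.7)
My plan is to combine Theorem \ref{t:LPposmeas} with homogeneity of the spectrum and then invoke inverse spectral theory to exclude singular spectral components. First, under the Pastur-Tkachenko hypothesis, $\|\alpha^{(n)} - \alpha\|_\infty$ decays faster than $e^{-\eta q_{n+1}}$ for every $\eta > 0$, so $q_n \|\alpha^{(n)} - \alpha\|_\infty \to 0$ holds trivially. Applying Theorem \ref{t:LPposmeas} then immediately gives $\Leb(\Sigma \setminus \CZ) = 0$; that is, the Lyapunov exponent vanishes Lebesgue-almost everywhere on $\Sigma$.

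Second, I would show that $\Sigma$ is Carleson homogeneous. The Pastur-Tkachenko decay is far stronger than the hypothesis of Last's criterion \cite[Theorem~4.ii]{Last1992CMP}; transporting the argument of \cite{FilLuk2017JST} to the CMV setting via the periodic approximants $\widetilde{\CE}^{\omega,q}$ and the associated gap-labelling should upgrade this to Carleson homogeneity of $\Sigma$, which in particular forces $\Leb(\Sigma) > 0$. Alternatively, one could verify the stronger hypothesis \eqref{eq:LPsumbd} directly: the super-exponential decay of $\|\CE_n-\CE_{n-1}\|$ in $q_n$ makes the tail sum on the left arbitrarily small, while $\|\alpha\|_\infty < 1$ guarantees $\Leb(\Sigma_k)$ is bounded below for $k$ large.

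Finally, with $\Leb(\Sigma \setminus \CZ) = 0$ and $\Sigma$ homogeneous, I would conclude via inverse spectral theory. Kotani's theorem for ergodic CMV matrices forces $\CE$ to be reflectionless on $\CZ$, hence on a set that agrees with the homogeneous set $\Sigma$ up to Lebesgue measure zero. The CMV analogue of the Sodin-Yuditskii theorem -- a reflectionless operator with homogeneous essential spectrum has purely absolutely continuous spectrum there -- then closes the argument. The main obstacle is importing the correct CMV formulation of this reflectionless/homogeneous-spectrum result; modulo that input, the theorem is a direct assembly of Theorem \ref{t:LPposmeas}, homogeneity, and Kotani theory.
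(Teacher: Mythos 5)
Your proposal follows essentially the same route as the paper: Theorem \ref{t:LPposmeas} (whose hypothesis \eqref{eq:alphao(q)} is trivially satisfied by the super-exponential Pastur--Tkachenko decay) gives $\Leb(\Sigma\setminus\CZ)=0$, homogeneity comes from \cite{FilLuk2017JST}, reflectionlessness on $\CZ$ comes from the almost-periodic result of \cite{BRZ2009CMP} (Theorem \ref{t:brz}), and the ``reflectionless plus homogeneous implies purely a.c.'' input you call the CMV Sodin--Yuditskii analogue is exactly Theorem \ref{t:gz09} from \cite{GZ2009JDE}. The only caveat is that the reflectionlessness step must use the deterministic almost-periodic statement (valid for \emph{every} element of the hull, hence for $\CE$ itself) rather than a generic a.e.-$\omega$ Kotani-theoretic conclusion, a distinction the paper is careful about in its introduction.
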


That Theorem \ref{t:PTac} holds in the Jacobi \cite{Egorova1987} and Schr\"odinger \cite{PasTka1984,PasTka1988} settings has been well-established. However, our proof is not an analogue of the arguments from those references; instead, by using our spectral approximation results, our proof works by ``bootstrapping'' the very weak result $\Leb(\Sigma \setminus \CZ)=0$ from Theorem~\ref{t:LPposmeas} into pure a.c.\ spectrum by using results from the inverse theory. It is tempting to think that one can immediately proceed from $\Leb(\Sigma \setminus \CZ)=0$ to purely a.c.\ spectrum via Kotani Theory, but there are at least two obstacles:
\begin{itemize}
\item If $\Leb(\Sigma) = 0$, then $\Leb(\Sigma \setminus \CZ)=0$ holds trivially, but $\Sigma$ cannot support absolutely continuous measures, and hence the a.c.\ spectrum is empty in this case.
\smallskip

\item Even when $\Leb(\Sigma) > 0$, the strongest conclusion that one can draw from the statement $\Leb(\Sigma \setminus \CZ) = 0$ and Kotani theory is that the a.c.\ spectrum of $\CE_\omega$ is $\Leb(\Sigma \setminus \sigma_\ac(\CE_\omega)) = 0$ for a.e.\ $\omega$ (in the almost-periodic setting, one can strengthen this to $\Leb(\Sigma\setminus \sigma_\ac(\CE^\omega)) = 0$ for every $\omega \in \Omega$ using \cite[Theorem~10.9.11]{S2}). In any event, one cannot exclude singular spectrum without additional arguments beyond Kotani theory.
\smallskip
\end{itemize}

\subsection{Motivation: Quantum Walks} An important motivation to study CMV matrices comes from quantum walks, which we briefly describe. Quantum walks are quantum mechanical analogues of classical random walks. A good review of the topic can be found in \cite{QWcomp}. They are important in physics and computer science, particularly given their possible applications in quantum computing algorithms. For instance, quantum walks have been applied to the element distinctiveness problem \cite{Ambainis}, universal computation \cite{Childs}, and search algorithms \cite{SKW}. A good summary of the uses of quantum walks in quantum computing algorithms can be found in \cite{Venegas-Andraca}. There has been a lot of recent interest in using spectral theory to determine the spreading rates and other quantum dynamical characteristics of certain quantum walk models.

A quantum walk on $\Z$ is given by the iteration of a unitary operator on $\mathcal{H} := \ell^2(\Z) \otimes \C^2$. The $\ell^2(\mathbb Z)$ component represents the wavefunction of the walker (or, alternatively, the probability amplitude for the walker's location) and the $\mathbb C^2$ term represents the walker's spin. Denoting the standard basis of $\C^2$ by $\{\vec e_+, \vec e_-\}$, we obtain an orthonormal basis for $\mathcal H$ by taking vectors of the form $\delta_n^+ = \delta_n \otimes \vec e_+$ and $\delta_n^- = \delta_n \otimes \vec e_-$. Then, the time-one transition operator of a quantum walk is given by $\mathbf{U} = \mathbf{S}\mathbf{Q}$, where $\mathbf{S}$ is the biased shift $\delta_n^\pm \mapsto \delta_{n \pm 1}^\pm$, and $\mathbf{Q}$ is of the form
\[
\mathbf{Q} = \bigoplus_{n \in \Z} (\Id_n \otimes Q_n),
\]
where $Q_n \in \U(2)$ and $\Id_n$ denotes the identity operator on coordinate $n$, where one views
\[
\ell^2(\Z)
\cong
\bigoplus_{n \in \Z} \C.
\]
In the important paper \cite{CGMV}, Cantero, Gr\"unbaum, Moral, and Vel\'azquez show that the unitary operator $\mathbf{U}$ enjoys a matrix representation as a CMV matrix if one orders the orthonormal basis suitably (namely, in the order $\ldots \delta_0^+,\delta_0^-,\delta_1^+,\delta_1^-,\delta_2^+,\ldots$). This point of view has been quite fruitful in the analysis of 1D coined quantum walks, as it enables one to use theorems about  CMV matrices to draw conclusions about the spreading behavior of the quantum walk.

As an example, using Theorem A.2 of \cite{FilOng2017JFA} we can see that our Theorem \ref{t:PTac} has a direct consequence for the speading of quantum  walks:
\begin{coro}
Suppose the coins $Q_n$ of the quantum walk $\mathbf{U}$ lie in the Pastur-Tkachenko class. Then $\mathbf{U}$ has purely a.c.\ spectrum and hence exhibits scattering in the sense that any wave packet leaves any compact region in finite time. That is, for any $J\in\Z_+$, and $\psi\in\ell^2(\mathbb Z)$,
\begin{equation}\label{eq:RAGEac}
\lim_{n\to\infty}  
\sum_{j=-J}^J
\left\vert \left< \delta_j^+, \mathbf{U}^n \psi\right>\right\vert^2
+ \left\vert \left< \delta_j^-, \mathbf{U}^n \psi\right>\right\vert^2
=
0.
\end{equation}
\end{coro}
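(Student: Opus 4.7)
The plan is to chain three results: the CGMV correspondence reduces the quantum walk to a CMV matrix, Theorem~\ref{t:PTac} yields purely absolutely continuous spectrum, and a RAGE-type statement gives the scattering conclusion.

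First, I invoke the CGMV theorem from \cite{CGMV} to realize $\mathbf{U}$ as an extended CMV matrix $\CE$ once the basis of $\mathcal H$ is reordered as $\ldots, \delta_0^+, \delta_0^-, \delta_1^+, \delta_1^-, \ldots$. The Verblunsky coefficients of $\CE$ are explicit algebraic functions of the entries of the coins $Q_n$, so the Pastur--Tkachenko hypothesis on the coins $(Q_n)$ passes through the correspondence to give a Verblunsky sequence $(\alpha_n)$ in the Pastur--Tkachenko class, uniformly bounded away from $\pD$ (the latter requiring that the coins be uniformly bounded away from degenerate configurations, which is implicit in the hypothesis). Theorem~\ref{t:PTac} then yields that $\CE$ has purely absolutely continuous spectrum, and unitary equivalence transfers this conclusion to $\mathbf{U}$.

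For the scattering statement \eqref{eq:RAGEac}, I would invoke Theorem~A.2 of \cite{FilOng2017JFA}, which is precisely the RAGE-type implication that purely a.c.\ spectrum of the CMV realization of a quantum walk forces $\abs{\langle \delta_j^\pm, \mathbf{U}^n \psi\rangle}^2 \to 0$ as $n \to \infty$ for each fixed $j$. Since the sum in \eqref{eq:RAGEac} is over the finite range $-J \le j \le J$, the limit passes through the sum and the desired conclusion follows. Conceptually, this is nothing more than the Riemann--Lebesgue lemma applied to the (now absolutely continuous) spectral measures of $\mathbf{U}$, combined with the fact that the projection onto $\mathrm{span}\{\delta_j^\pm : |j|\le J\}$ is finite-rank, hence compact.

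I anticipate no serious obstacles in executing this plan; the entire argument is essentially a chain of three invocations, and the only point requiring care is the translation of ``coins in the Pastur--Tkachenko class'' into a Pastur--Tkachenko condition on the Verblunsky coefficients. This is routine given the explicit CGMV formulas expressing $\alpha_n$ in terms of the matrix entries of $Q_n$, which immediately carry periodic coin sequences to periodic Verblunsky sequences (up to a doubling of period) and preserve uniform approximation rates.
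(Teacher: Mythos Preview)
Your proposal is correct and matches the paper's approach: the paper explicitly states that the corollary follows from Theorem~\ref{t:PTac} together with Theorem~A.2 of \cite{FilOng2017JFA}, with the CGMV correspondence \cite{CGMV} serving as the bridge between the quantum walk and the CMV setting. You have simply fleshed out the details of this chain of implications, including the (routine) point that the Pastur--Tkachenko condition on the coins transfers to the Verblunsky coefficients via the explicit CGMV formulas.
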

Using other methods, (see \cite{F2017CMP}) it is already known that these quantum walks actually exhibit ballistic transport. Nevertheless, this corollary demonstrates that our results have relevance to physics.

\section*{Acknowledgements}

J.F.\ was supported in part by an AMS--Simons travel grant, 2016--2018. T.V.\ was supported in part by NSF Grant DMS--1361625.

\section{Spectral approximation of ergodic CMV matrices}

Let us start off by defining a version of the GZ formalism. We remark that in our paper, we follow Simon's conventions \cite{S1,S2} regarding how $\CE$ depends on the $\alpha_n$'s. Note that these conventions differ from the notation used in \cite{GZ2006JAT}. We will use that any CMV operator $\CE$ enjoys a factorization into direct sums of $2\times 2$ unitaries of the form
\[
\Theta(\alpha)
\eqdef
\begin{bmatrix}
\overline{\alpha} & \rho \\
\rho &-\alpha
\end{bmatrix},
\quad
\rho = \rho_\alpha
\eqdef
\left(1-|\alpha|^2\right)^{1/2}.
\]
That is, we have
\begin{equation}\label{eq:LM} 
\CE = \CL\CM,
\end{equation} where
\[
\CL = \bigoplus_{j \in \Z} \Theta(\alpha_{2j}),
\quad
\CM = \bigoplus_{j \in \Z} \Theta(\alpha_{2j+1}),
\]
and the $\Theta$ matrix corresponding to $\alpha_n$ acts on coordinates $n$ and $n+1$. Given a solution $u$ of $\CE u = zu$, we define $v = \mathcal L^{-1} u$. One can check that 

\begin{equation}\label{eq:Mu=zv}\mathcal M u = zv,
\end{equation} and hence $\CE^\top v= zv$ (since each $\Theta(\alpha)$ is real-symmetric). Using the equation $u = \CL v$, we have
\[
\begin{bmatrix}
u(2j) \\ u(2j+1)
\end{bmatrix}
=
\Theta(\alpha_{2j})
\begin{bmatrix}
v(2j) \\ v(2j+1)
\end{bmatrix},
\]
which can be rearranged to yield
\[
\begin{bmatrix}
u(2j+1) \\ v(2j+1)
\end{bmatrix}
=
\frac{1}{\rho_{2j}}
\begin{bmatrix}
-\alpha_{2j} & 1 \\
1 & -\overline{\alpha_{2j}}
\end{bmatrix}
\begin{bmatrix}
u(2j) \\ v(2j)
\end{bmatrix}.
\]
Similarly, using $\CM u = zv$, we can deduce
\[
\begin{bmatrix}
zv(2j-1) \\ zv(2j)
\end{bmatrix}
=
\Theta(\alpha_{2j-1})
\begin{bmatrix}
u(2j-1) \\ u(2j)
\end{bmatrix},
\]
and get
\begin{equation}\label{eq:gzOddToEven}
\begin{bmatrix}
u(2j) \\ v(2j)
\end{bmatrix}
=
\frac{1}{\rho_{2j-1}}
\begin{bmatrix}
-\overline{\alpha_{2j-1}} & z \\
z^{-1} & -\alpha_{2j-1}
\end{bmatrix}
\begin{bmatrix} u(2j-1) \\ v(2j-1) \end{bmatrix}.
\end{equation}

Thus, we have
\[
\begin{bmatrix}
u(n+1) \\ v(n+1)
\end{bmatrix}
=
Y(n,z) \begin{bmatrix}
u(n) \\ v(n)
\end{bmatrix},
\]
where
\[
Y(n,z)
=
\frac{1}{\rho_n}
\begin{cases}
\begin{bmatrix}
-\alpha_n & 1 \\ 1 & - \overline{\alpha_n}
\end{bmatrix}
&  n \text{ is even,} \\[18pt]
\begin{bmatrix}
-\overline{\alpha_n} & z \\ z^{-1} & -\alpha_n
\end{bmatrix}
&  n \text{ is odd.}
\end{cases}
\]
For later use, we notice that \eqref{eq:gzOddToEven} can also  be inverted to yield
\begin{equation} \label{eq:GZMAGIC}
\rho_{2j-1}
\begin{bmatrix}
u(2j-1) \\ v(2j-1)
\end{bmatrix}
=
\begin{bmatrix}
\alpha_{2j-1} & z \\ z^{-1} & \overline{\alpha_{2j-1}}
\end{bmatrix}
\begin{bmatrix}
u(2j) \\ v(2j)
\end{bmatrix}.
\end{equation}

The first technical lemma is a formula for the derivative of the $n$th band function in terms of associated Bloch wave solutions. We first introduce suitable truncations of $\CE$ whose eigenvectors can be used to generate Bloch waves. Suppose that $\CE$ is $q$-periodic; throughout this section, we also assume that $q$ is even. We define $\CL_q = \CL_q(k)$ and $\CM_q = \CM_q(k)$ as in \cite[Equation~(11.2.7)]{S2}, that is:
\[
\CL_q
=
\begin{bmatrix}
\Theta_0 \\
& \ddots \\
&& \Theta_{q-2}
\end{bmatrix},
\quad
\CM_q
=
\CM_q(k)
=
\begin{bmatrix}
-\alpha_{q-1} &&&& \rho_{q-1} e^{-ikq} \\
 & \Theta_1 \\
 && \ddots \\
 &&& \Theta_{q-3} \\
 \rho_{q-1} e^{ikq} &&&& \overline{\alpha}_{q-1}
\end{bmatrix}.
\]
Then, we define $\CE_q = \CL_q \CM_q$ and the dual operator $\widetilde{\CE}_q = \CM_q \CL_q$\footnote{This is a subtle point. Since the full-line operators $\CL$ and $\CM$ are real-symmetric, one has $\CM \CL = \CE^\top$. However, this is no longer true for the Floquet operators, and hence one does not have $\widetilde{\CE}_q = \CE_q^\top$.} and let $\{z_n(k)\}_{n=1}^q$ denote an enumeration of the eigenvalues of $\CE_q$. Observe that $\CE_q$ and $\widetilde{\CE}_q$ have the same set of eigenvalues. This is evident  by looking at the proof of Lemma 2.2 of \cite{GZ2006JAT}; one can also see that they are unitarily equivalent viz.\ $\widetilde{\CE}_q = \CL_q^* \CE_q \CL_q$.  The eigenvectors of $\CE_q$ generate Bloch wave solutions to the difference equation $\CE u = zu$. Concretely, let $u_n = u_n(k)$ denote a normalized eigenvector of $\CE_q$ corresponding to the eigenvalue $z_n(k)$. We may extend $u_n(k,j)$ to all $j \in \Z$, obtaining a solution of $\CE u = z_n(k) u$ with
\[
u_n(k,j+q)
=
e^{ikq} u_n(k,j)
\text{ for all } j \in \Z.
\]
We then define $v_n(k) = \CL_q^{-1} u_n(k)$. One can check that $v_n$ solves the dual equation $\widetilde{\CE}_q v_n = z_n v_n$. As with $u$, we can extend $v_n$ to a globally defined solution of $\CE^\top v = z_n v$ with 
\[
v_n(k,j+q) = e^{ikq} v_n(k,j).
\]

\begin{lemma}
Given $k \in (0, \pi/q)$, let $u_n$ and $v_n$ be the solutions above, normalized so that
\begin{equation}\label{eq:uv-normalize}
\sum_{j=0}^{q-1}|u_n(k,j)|^2
=
\sum_{j=0}^{q-1}|v_n(k,j)|^2
=
1.
\end{equation}
Then for every integer $n$,
\begin{equation} \label{eq:dE/dk}
\dfrac{\dd z_n}{\dd k}(k)
=
iq \rho_{-1} [\overline{v_n(k,-1)}u_n(k,0) - \overline{v_n(k,0)}u_n(k,-1)].
\end{equation}
\end{lemma}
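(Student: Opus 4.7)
The plan is a Hellmann--Feynman calculation adapted to the unitary setting. The crucial structural observation is that only $\CM_q(k)$ depends on $k$, and the $k$-dependence is localized entirely at the two Floquet ``corner'' entries of $\CM_q$, at positions $(0,q-1)$ and $(q-1,0)$. First, I differentiate the eigenvalue equation $\CE_q(k) u_n(k) = z_n(k) u_n(k)$ in $k$ and pair the result with $u_n$. Unitarity of $\CE_q$ gives $\CE_q^* u_n = \overline{z_n} u_n$, so the two terms involving $\tfrac{\dd u_n}{\dd k}$ (one on each side of the differentiated equation) cancel, and the normalization $\|u_n\|=1$ yields
\[
\frac{\dd z_n}{\dd k}(k)
=
\Bigl\langle \frac{\dd \CE_q}{\dd k}\, u_n,\, u_n\Bigr\rangle.
\]
Since $\CL_q$ is $k$-independent and unitary, I write $\tfrac{\dd \CE_q}{\dd k} = \CL_q\, \tfrac{\dd \CM_q}{\dd k}$ and move $\CL_q$ into the second slot, using $\CL_q^* u_n = \CL_q^{-1} u_n = v_n$, to obtain
\[
\frac{\dd z_n}{\dd k}(k)
=
\Bigl\langle \frac{\dd \CM_q}{\dd k}\, u_n,\, v_n\Bigr\rangle.
\]

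Next, a direct inspection of $\CM_q(k)$ reveals that $\tfrac{\dd \CM_q}{\dd k}$ has only two nonzero entries: $-iq\rho_{q-1}e^{-ikq}$ at position $(0,q-1)$ and $+iq\rho_{q-1}e^{ikq}$ at position $(q-1,0)$. Expanding the above inner product then gives
\[
\frac{\dd z_n}{\dd k}(k)
=
iq\rho_{q-1}\bigl[e^{ikq}\overline{v_n(k,q-1)}\, u_n(k,0)
- e^{-ikq}\overline{v_n(k,0)}\, u_n(k,q-1)\bigr].
\]
The proof concludes by rewriting the boundary values at index $q-1$ in terms of values at $-1$ via the Floquet relations $u_n(k,q-1) = e^{ikq}u_n(k,-1)$ and $v_n(k,q-1) = e^{ikq}v_n(k,-1)$. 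The first is by construction; the second follows because $\CL$ is block-local, so $v = \CL^{-1}u$ inherits the Floquet twist. The phases $e^{\pm ikq}$ cancel, and using periodicity $\rho_{q-1} = \rho_{-1}$ one recovers \eqref{eq:dE/dk}.

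The item requiring the most care is bookkeeping around the Floquet boundary: one must verify that the $v_n$ defined via $\CL_q^{-1}u_n$ on the cell $\{0,\dots,q-1\}$ agrees with the full-line dual Bloch wave $\CL^{-1}u_n$ after extension by the Floquet phase, so that the value $v_n(k,-1)$ appearing in the formula is well-defined and equal to $e^{-ikq}v_n(k,q-1)$; and one must track the $e^{\pm ikq}$ factors carefully, because $k$ enters asymmetrically into the two corner entries of $\CM_q$ and so the cancellation is not absorbed by any obvious symmetry. Beyond this, the argument is a standard rank-two perturbation computation in finite dimensions.
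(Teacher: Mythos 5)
Your argument is correct and follows essentially the same route as the paper's: a Hellmann--Feynman computation reducing $\frac{\dd z_n}{\dd k}$ to $\langle v_n, \frac{\dd \CM_q}{\dd k} u_n\rangle$, evaluation of the two Floquet corner entries of $\frac{\dd \CM_q}{\dd k}$, and the twist relations $u_n(k,q-1)=e^{ikq}u_n(k,-1)$, $v_n(k,q-1)=e^{ikq}v_n(k,-1)$. The only cosmetic difference is that the paper gets the cancellation of the $\dot u_n$-terms by differentiating $z_n=\langle v_n,\CM_q u_n\rangle$ and using $\frac{\dd}{\dd k}\|v_n\|^2=0$, whereas you differentiate the eigenvalue equation and use unitarity of $\CE_q$; both versions implicitly rely on the nondegeneracy of $z_n(k)$ for $k\in(0,\pi/q)$ (which the paper cites) to justify differentiating the eigenvectors.
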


\begin{proof}
By (11.2.6) and (11.2.7) of \cite{S2}, we have 
\begin{equation}
\frac{\dd \CE_q}{\dd k}(k)
=
\mathcal L_q \frac{\dd \CM_q}{\dd k},
\end{equation}
and we may note that
\[
\frac{\dd \CM_q}{\dd k}
=
iq\begin{bmatrix}
0&\cdots&0& -\rho_{q-1}e^{-ikq}\\
\vdots&&&0\\
0&&&\vdots\\
\rho_{q-1}e^{ikq}&0&\cdots &0
\end{bmatrix} 
\]
Notice that for $k\in (0,\pi/q)$, $z_n(k)$ is nondegenerate by \cite[Theorem~11.2.2]{S2}. By \eqref{eq:Mu=zv} and the definitions, we have
\[
z_n(k)
=
\left\langle v_n(k), \CM_q(k) u_n(k) \right\rangle.
\]
We claim that
\begin{equation} \label{eq:eigDerivClaim}
\frac{\dd z_n}{\dd k}(k)
=
\left< v_n(k), \frac{\dd \CM_q}{\dd k}u_n(k)\right>.
\end{equation}
To see this, it suffices to show that
\begin{equation} \label{eq:cancellingTerms}
\left< \frac{\dd v_n}{\dd k}(k),  \CM_q(k)u_n(k)\right>
+ \left< v_n(k) ,  \CM_q(k) \frac{\dd u_n}{\dd k}(k)\right>
= 0.
\end{equation}
However, this is immediate from \eqref{eq:uv-normalize} and the unitarity of $\CL_q$, $\CM_q$. In particular, using a dot to denote differentiation with respect to $k$, we have
\begin{align*}
\left\langle \dot{v}_n, \CM_q u_n \right\rangle 
+ \left\langle v_n, \CM_q \dot{u}_n \right\rangle
& =
\left\langle \dot{v}_n, zv_n \right\rangle + \left\langle z^{-1} u_n, \dot{u}_n \right\rangle \\[3pt]
& = 
z \left( \left\langle \dot{v}_n, v_n \right\rangle + \left\langle  u_n, \dot{u}_n \right\rangle \right) \\[3pt]
& =
z \left( \left\langle \dot{v}_n, v_n \right\rangle + \left\langle \CL_q^*  u_n, \CL_q^* \dot{u}_n \right\rangle \right) \\[3pt]
& =
z \left( \left\langle \dot{v}_n, v_n \right\rangle + \left\langle v_n, \dot{v}_n \right\rangle \right) \\
& =
z \frac{\dd}{\dd k}\|v_n\|^2 \\[3pt]
& =
z \frac{\dd}{\dd k}(1) \\[3pt]
& = 0,
\end{align*}
where we have used $z = z_n \in \pD$, $u_n = \CL v_n$, and that $\CL_q$ is independent of $k$. Thus, \eqref{eq:cancellingTerms} follows, so we get \eqref{eq:eigDerivClaim}. Consequently,
\[
\frac{\dd z_n}{\dd k}(k)
=
iq[e^{ikq}\rho_{q-1}\overline{ v_n(k,q-1)}u_n(k,0)
-e^{-ikq}\rho_{q-1}\overline{ v_n(k,0)}u_n(k,q-1)]
\]
Using $u_n(k,j+q) = e^{ikq}u_n(k,j)$ and $v_n(k,j+q) = e^{ikq}v_n(k,j)$ we obtain
\[
\frac{\dd z_n}{\dd k}(k)
=
iq[\rho_{q-1} \overline{v_n(k,-1)}u_n(k,0)
- \rho_{q-1} \overline{v_n(k,0)}u_n(k,-1)],
\]
as desired.
\end{proof}

\begin{lemma} \label{l:monodromyNorm}
Suppose $\CE$ is $q$-periodic with $q$ even, and denote the associated monodromy matrix by
\[
\Phi_q(z)
=
Y(q-1,z) \cdots Y(0,z).
\]
For all $z$ with $\tr(\Phi_q(z)) \in (-2,2)$, we have
\[
\Vert \Phi_q (z)\Vert 
\leq
4 q \left \vert \dfrac{\dd z}{\dd k}\right\vert^{-1},
\]
where $k$ is the appropriate Bloch wave number.
\end{lemma}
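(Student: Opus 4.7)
The plan is to diagonalize $\Phi_q(z)$ in a Bloch wave basis and relate the determinant of the eigenbasis to $\dd z/\dd k$ via \eqref{eq:dE/dk}.

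First, I would establish the symmetry $\overline{Y(n,z)} = \sigma_1 Y(n,z)\sigma_1$ for $z \in \pD$, where $\sigma_1 := \bigl(\begin{smallmatrix} 0 & 1 \\ 1 & 0 \end{smallmatrix}\bigr)$. This is a direct check on the two cases in the definition of $Y$, using $\overline{z^{-1}} = z$ for $|z| = 1$. It follows that $\overline{\Phi_q(z)} = \sigma_1 \Phi_q(z) \sigma_1$. Now let $w_+ := (u_n(k,0), v_n(k,0))^\top$, which by the Bloch property is an eigenvector of $\Phi_q(z)$ with eigenvalue $e^{ikq}$. Since $\det Y(n,z) = -1$ in both cases and $q$ is even, we have $\det\Phi_q(z) = 1$, so the other eigenvalue is $e^{-ikq}$; the symmetry above then shows that a corresponding eigenvector is $w_- := \sigma_1\overline{w_+} = (\overline{v_n(k,0)}, \overline{u_n(k,0)})^\top$.

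Next, with $P := (w_+ \mid w_-)$ and $D := \mathrm{diag}(e^{ikq}, e^{-ikq})$, the factorization $\Phi_q(z) = PDP^{-1}$ together with $\|D\| = 1$ and the $2\times 2$ identity $\|P^{-1}\| = \|P\|/|\det P|$ (a consequence of $\sigma_{\min}(P)\sigma_{\max}(P) = |\det P|$) gives
\[
\|\Phi_q(z)\| \leq \frac{\|P\|^2}{|\det P|}.
\]
From the normalization \eqref{eq:uv-normalize}, $|u_n(k,0)|^2, |v_n(k,0)|^2 \leq 1$, so
\[
\|P\|^2 \leq \|P\|_{\mathrm{HS}}^2 = 2\bigl(|u_n(k,0)|^2 + |v_n(k,0)|^2\bigr) \leq 4.
\]

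Finally, the heart of the argument is to identify $|\det P|$. A direct computation gives $\det P = |u_n(k,0)|^2 - |v_n(k,0)|^2$. To relate this to $\dd z_n/\dd k$, I would substitute \eqref{eq:GZMAGIC} at $j = 0$ into the bracketed expression in \eqref{eq:dE/dk}: conjugating the second component of \eqref{eq:GZMAGIC} and using $\overline{z^{-1}} = z$ yields $\rho_{-1}\overline{v_n(k,-1)} = z\overline{u_n(k,0)} + \alpha_{-1}\overline{v_n(k,0)}$, while the first component gives $\rho_{-1}u_n(k,-1) = \alpha_{-1}u_n(k,0) + zv_n(k,0)$. The $\alpha_{-1}$-cross terms then cancel, so the bracket in \eqref{eq:dE/dk} reduces to $z(|u_n(k,0)|^2 - |v_n(k,0)|^2)$. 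Combined with \eqref{eq:dE/dk} and $|z| = 1$, this shows $|\det P| = q^{-1}|\dd z_n/\dd k|$, and assembling the bounds produces the claim $\|\Phi_q(z)\| \leq 4q\left|\dd z/\dd k\right|^{-1}$. The main subtlety is the symmetry argument identifying $w_-$; everything else is routine linear algebra.
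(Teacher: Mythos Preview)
Your proof is correct and follows essentially the same strategy as the paper: identify the two eigenvectors of $\Phi_q(z)$ as $w_+=(u_n(k,0),v_n(k,0))^\top$ and $w_-=\sigma_1\overline{w_+}$, bound $\|\Phi_q\|$ through the eigenvector geometry, and reduce the bracket in \eqref{eq:dE/dk} to $z(|u_n(k,0)|^2-|v_n(k,0)|^2)$ via \eqref{eq:GZMAGIC}. The only organizational difference is that the paper normalizes the eigenvectors to unit length, invokes the Last-type inequality $\|\Phi_q\|^2\le 4/(1-|\langle x^+,x^-\rangle|^2)$, and then has to rescale back to the normalization \eqref{eq:uv-normalize} (tracking a factor $N^2\ge 1/2$); you instead keep the normalization \eqref{eq:uv-normalize} throughout and use the elementary $2\times2$ bound $\|PDP^{-1}\|\le\|P\|^2/|\det P|$, which sidesteps that bookkeeping. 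Your direct verification of the $\sigma_1$-conjugation symmetry also replaces the paper's appeal to the $\SU(1,1)$ structure from \cite{S2}, making the argument a bit more self-contained, but the content is the same.
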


\begin{proof}
Through a calculation virtually identical to that which proves \cite[Equation~(3.1)]{Last1993CMP}, we obtain
\begin{equation}\label{eq:PhiBound}
\Vert \Phi_q(z)\Vert^2
\leq
\frac{4}{1-\vert \left< x^+,x^-\right>\vert^2},
\end{equation}
where $x^\pm$ denote the normalized eigenvectors of $\Phi_q(z)$.

Here, \cite[Section~10.4]{S2} will prove useful. In particular, it is easy to check that $\Phi_q(z)$ is an element of the group $\SU(1,1)$. Furthermore, since we are concerned about $z$ in the interiors of bands, we are in the elliptic setting, where the trace of $\Phi_q(z)$ lies in $(-2,2)$. Using Theorem 10.4.3(a) and (10.4.16) of \cite{S2} we can see that if we set $x^+ = (u_n(k,0), v_n(k,0))^\top$ we can set $x^-=(\overline{v_n(k,0)}, \overline{u_n(k,0)})^\top$. Normalize the Bloch-wave solutions $u_n$ and $v_n$ such that $\|x^+\|^2 = \|x^-\|^2 = 1$. From this, we obtain
\begin{align*}
1 - |\langle x^+, x^-\rangle|^2
& =
1 - \left\vert \mathrm{Re}\left(2u_n(k,0)\overline{ v_n(k,0)}\right) \right\vert^2 \\
& =
(\vert u_n(k,0)\vert^2+\vert v_n(k,0)\vert^2)^2-\vert \mathrm{Re}(2u_n(k,0)\overline{ v_n(k,0)}) \vert^2 \\
& \geq
(\vert u_n(k,0)\vert^2+\vert v_n(k,0)\vert^2)^2-\vert 2u_n(k,0)\overline{ v_n(k,0)} \vert^2 \\
& =
(\vert u_n(k,0)\vert^2-\vert v_n(k,0)\vert^2)^2.
\end{align*}
Combining this with \eqref{eq:PhiBound}, we get
\begin{align}
\label{eq:prelimnormbd}
\Vert \Phi_q (z)\Vert 
\leq 
\frac{2}{\big| \vert u_n(k,0)\vert^2-\vert v_n(k,0)\vert^2 \big|}.
\end{align}

Because we have normalized $u_n$ and $v_n$ by taking $\|x^+\|^2 = |u_n(k,0)|^2 + |v_n(k,0)|^2 = 1$, we must have for all $n$
\begin{align}
\label{eq:maxentry}
\max\{|u_n(k,0)|^2, |v_n(k,0)|^2\} 
&\geq 
\frac{1}{2}.
\end{align}
Furthermore, because $u_n = \CL_q v_n$ are unitarily related, the vectors $u_n$ and $v_n$ have the same norm in $\C^q$; by \eqref{eq:maxentry}, this norm has a uniform lower bound
\begin{align*}
\|u_n\|^2 
= 
\|v_n\|^2 
=: 
N^2 
\geq 
\frac{1}{2}.
\end{align*} 
Consider now the vectors $\widetilde{u}_n := u_n/N$, $\widetilde{v}_m := v_n/N$ (which are unit vectors in $\C^q$). By \eqref{eq:GZMAGIC}, we get
\begin{align*}
\rho_{-1}\widetilde{u}_n(k,-1)
& = 
z\widetilde{v}(k,0) + \alpha_{-1} \widetilde{u}(k,0) \\
\rho_{-1} \widetilde{v}_n(k,-1)
& =
z^{-1} \widetilde{u}(k,0) + \overline{\alpha_{-1}} \widetilde{v}(k,0).
\end{align*}
Plugging this into \eqref{eq:dE/dk} we obtain
\begin{align}
\nonumber
\frac{\dd z_n(k)}{\dd k}
& =
iq \rho_{-1}\left( \overline{\widetilde{v}_n(k,-1)} \widetilde{u}_n(k,0) - \overline{\widetilde{v}_n(k,0)} \widetilde{u}_n(k,-1) \right) \\
\nonumber
& =
iq z \left(\vert \widetilde{u}_n(k,0)\vert^2-\vert \widetilde{v}_n(k,0)\vert^2 \right) \\
&= 
\frac{iq z}{N^2}\left(\vert u_n(k,0)\vert^2-\vert v_n(k,0)\vert^2 \right). \label{eq:dE/dk00}
\end{align}

Using $N^2 \geq 1/2$ and $|z| = 1$, it follows from \eqref{eq:prelimnormbd} and \eqref{eq:dE/dk00} that

\begin{equation}
\Vert \Phi_q (z)\Vert\leq \frac{4q}{\left\vert\dfrac{\dd z_n}{\dd k}(k) \right\vert}.
\end{equation}
\end{proof}

With Lemma~\ref{l:monodromyNorm} in hand, the main technical challenges have been dealt with. At this point, one can prove Theorem~\ref{t:perapprox} in the same way that Last proves \cite[Theorem~1]{Last1993CMP}. We provide a short sketch for the reader's benefit. Please consult Section 11.2 of \cite{S2} for a helpful discussion of Floquet theory for CMV operators.
\begin{proof}[Proof Sketch of Theorem~\ref{t:perapprox}]
For each $\omega \in \Omega$, define the sets
\[
S_\omega
:=
\limsup_{m \to \infty} \set{z \in \Int \, \Sigma^{\omega,m} : \left| \frac{\dd z}{\dd k} \right| \geq \frac{1}{m^2} },
\quad 
\A_\omega 
:= 
\limsup_{p \to \infty} \Sigma^{\omega,p}
\]
By the Borel--Cantelli lemma, one has
\[
\Leb(\A_\omega \setminus S_\omega) = 0
\]
for every $\omega \in \Omega$. On the other hand, by Fubini's theorem, the multiplicative ergodic theorem, and Lemma~\ref{l:monodromyNorm}, one has $\Leb(S_\omega \setminus \CZ) = 0$ for almost every $\omega$. It follows that $\Leb(\A_\omega \setminus \CZ) = 0$ for $\mu$-a.e.\ $\omega$, as desired.
\end{proof}

\section{Spectral approximation of limit-periodic CMV matrices}

In this section, we will collect a handful of facts about the spectra of limit-periodic CMV matrices and their periodic approximants.  Throughout, for a subset $E \subset \partial\D$, we denote
\begin{align*}
B_\varepsilon(E) := \{z \in \partial\D : \inf_{x \in E} |z-x| < \varepsilon\}.
\end{align*}
The \emph{Hausdorff distance} between two compact sets $F,K \subset \partial \D$ is defined by
\[
d_\Hd(F,K)
=
\inf\set{\e > 0 : F \subset B_\e(K) \text{ and } K \subset B_\e(F)}.
\] 
The fundamental fact driving the analysis in this section is the following
\begin{lemma}
\label{lem:hausleqop}
For any unitary operators $U,V$ on $\ell^2$, we have
\begin{align}
\label{eq:hausleqop}
d_\Hd(\sigma(U), \sigma(V)) 
\leq 
\|U-V\|,
\end{align}
where $\|\cdot\|$ denotes the usual operator norm.
\end{lemma}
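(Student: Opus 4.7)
The plan is to show the symmetric one-sided containment $\sigma(U) \subset \overline{B_{\|U-V\|}(\sigma(V))}$, which together with the analogous statement with the roles of $U$ and $V$ reversed yields the Hausdorff bound \eqref{eq:hausleqop}. The workhorse is the standard fact that for a normal operator $N$ on a Hilbert space and any $\lambda \notin \sigma(N)$, one has the sharp resolvent identity
\[
\bigl\|(\lambda - N)^{-1}\bigr\|
=
\frac{1}{\dist(\lambda,\sigma(N))},
\]
which follows directly from the spectral theorem (the norm of a bounded Borel function of $N$ is the sup-norm over $\sigma(N)$, applied to $z \mapsto (\lambda - z)^{-1}$).

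Armed with this, I would fix $\lambda \in \sigma(U)$ and argue by contradiction: if $\dist(\lambda,\sigma(V)) > \|U - V\|$, then $\lambda \notin \sigma(V)$, and the resolvent identity gives $\|(\lambda - V)^{-1}\| \cdot \|U - V\| < 1$. Writing
\[
\lambda - U
=
(\lambda - V) - (U - V)
=
(\lambda - V)\bigl(\Id - (\lambda - V)^{-1}(U - V)\bigr),
\]
the Neumann series then converges and produces a bounded inverse of $\lambda - U$, contradicting $\lambda \in \sigma(U)$. Hence $\dist(\lambda, \sigma(V)) \le \|U - V\|$ for every $\lambda \in \sigma(U)$; equivalently, $\sigma(U) \subset \overline{B_{\|U-V\|}(\sigma(V))}$.

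Interchanging $U$ and $V$ (using that $V$ is also normal, since both are unitary) yields the reverse containment $\sigma(V) \subset \overline{B_{\|U-V\|}(\sigma(U))}$, and taking the infimum in the definition of $d_{\Hd}$ delivers \eqref{eq:hausleqop}. There is no real obstacle here; the only place where unitarity (or at least normality) is essential is the resolvent identity, since for a general bounded operator one only has $\|(\lambda - N)^{-1}\| \ge \dist(\lambda,\sigma(N))^{-1}$, which would force one to absorb an additional factor. Since both $U$ and $V$ are unitary this nuance does not arise, and the proof reduces to the two-line Neumann-series argument above.
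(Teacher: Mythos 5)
Your proof is correct, and it is precisely the standard argument behind the result the paper invokes: the paper gives no self-contained proof, instead citing Kato's Theorem~V.4.10 (noting that his self-adjoint proof carries over to unitary, or more generally normal, operators), and that proof is exactly your combination of the sharp resolvent bound $\|(\lambda-N)^{-1}\| = \dist(\lambda,\sigma(N))^{-1}$ for normal $N$ with the Neumann-series perturbation step. So your argument matches the paper's approach in substance; your closing remark correctly identifies normality as the only place the hypothesis is used.
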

\begin{proof}
This follows from \cite[Theorem~V.4.10]{Kato} Kato states the theorem for self-adjoint operators, but the same proof works for unitary (or even bounded, normal) operators. Alternatively, there is an explicit proof for CMV operators in \cite[Proposition~4]{Ong12}.
\end{proof}
By way of Lemma \ref{lem:hausleqop}, we can use the rate function of a limit-periodic operator to control the rate of spectral convergence of the periodic approximants.

The following lemmata are classical, but their proofs are short, so we reproduce them for the sake of completeness:
\begin{lemma}
\label{lem:limsupcont}
If $\{\Sigma_n\}_{n\geq 1}$ and $\Sigma$ are compact sets in $\partial\D$ such that $d_\Hd(\Sigma,\Sigma_n) \to 0$, then $\limsup \Sigma_n \subset \Sigma$.
\end{lemma}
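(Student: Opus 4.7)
The plan is to unpack the standard measure-theoretic definition
\[
\limsup_{n\to\infty} \Sigma_n
=
\bigcap_{N\geq 1} \bigcup_{n\geq N} \Sigma_n,
\]
so that membership in $\limsup \Sigma_n$ is equivalent to membership in $\Sigma_n$ for infinitely many $n$. The claim then reduces to a short topological calculation using only the Hausdorff convergence hypothesis and the fact that $\Sigma$ is closed in $\partial \D$.

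Concretely, I would fix an arbitrary $z \in \limsup_n \Sigma_n$ and extract a subsequence $n_k \to \infty$ with $z \in \Sigma_{n_k}$ for every $k$. Given $\varepsilon > 0$, the hypothesis $d_\Hd(\Sigma,\Sigma_n) \to 0$ supplies some $N = N(\varepsilon)$ such that $\Sigma_n \subset B_\varepsilon(\Sigma)$ for all $n \geq N$. For $k$ large enough that $n_k \geq N$, this yields
\[
z \in \Sigma_{n_k} \subset B_\varepsilon(\Sigma),
\qquad
\text{i.e.,}
\qquad
\inf_{x\in\Sigma} |z-x| < \varepsilon.
\]
Since $\varepsilon > 0$ was arbitrary, $\inf_{x\in\Sigma}|z-x| = 0$, and since $\Sigma$ is closed, this forces $z \in \Sigma$.

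There is no real obstacle here; the lemma is a textbook consequence of the definitions. The only point worth being careful about is which flavor of set-theoretic limsup is intended. Under the Kuratowski outer-limit convention $\{z : z_{n_k} \to z \text{ for some } z_{n_k} \in \Sigma_{n_k}\}$, the same argument works with a routine $\varepsilon/2$ splitting: for large $k$ one has both $z_{n_k} \in B_{\varepsilon/2}(\Sigma)$ and $|z - z_{n_k}| < \varepsilon/2$, so again $z \in B_\varepsilon(\Sigma)$ for every $\varepsilon > 0$ and hence $z \in \Sigma$ by closedness.
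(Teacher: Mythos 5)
Your proof is correct and is essentially the paper's argument, just run in the direct rather than the contrapositive direction: both rest on the facts that Hausdorff convergence forces $\Sigma_n \subset B_\e(\Sigma)$ for all large $n$ and that closedness of $\Sigma$ gives points outside $\Sigma$ a positive distance from it. Your remark on the Kuratowski convention is a harmless bonus; the paper implicitly uses the set-theoretic $\limsup$, for which your main argument suffices.
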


\begin{proof}
Arguing by contraposition, suppose $t \not\in \Sigma$.  Because $\Sigma$ is compact, there exists $\e > 0$ such that $\inf \{|t-x| : x \in \Sigma\} \geq \e$.  However, by assumption, there also exists an $N$ such that, for all $n \geq N$, $d_\Hd(\Sigma,\Sigma_n) < \e$, and thus $\Sigma_n \subset B_{\e}(\Sigma)$ for such $n$. Consequently, $t \notin \Sigma_n$ for $n \ge N$, which means
\[
t \notin 
\limsup_{n \to \infty} \Sigma_n.
\]
\end{proof}

Recall that for the special almost-periodic class of ergodic CMV matrices, the underlying probability space $\Omega$ is a compact monothetic group with translation $T$ by a topological generator.  In this case the action of the transformation $T$ manifests as a shift on the Verblunsky coefficients $\alpha$; since $T$ is a topological generator for $\Omega$, we call $\Omega$ the ``shift-hull" of the almost-periodic CMV matrix $\CE$ in this case.

\begin{lemma}
\label{lem:APspectra}
Let $\CE$ be an almost-periodic CMV matrix with shift-hull $\Omega$.  Then for $\omega, \omega_0 \in \Omega$, we have $\sigma(\CE^\omega) = \sigma(\CE^{\omega_0})$.
\end{lemma}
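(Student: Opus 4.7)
The plan is to combine operator-norm continuity of $\omega \mapsto \CE^\omega$ with the observation that the deterministic spectrum (known to be attained on a full-measure subset of $\Omega$) must in fact be attained everywhere, since the Haar measure on the compact monothetic group $\Omega$ has full topological support.

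First, I would fix a translation-invariant metric $d$ on the compact metrizable abelian group $\Omega$, so that $d(T^j\omega_n, T^j\omega) = d(\omega_n, \omega)$ for every $j \in \Z$. Since $f$ is continuous on the compact space $\Omega$, it is uniformly continuous, and therefore $\omega_n \to \omega$ in $\Omega$ implies
\[
\sup_{j \in \Z} \bigl| \alpha_j^{\omega_n} - \alpha_j^\omega \bigr|
= \sup_{j \in \Z} \bigl| f(T^j\omega_n) - f(T^j\omega) \bigr|
\xrightarrow{n \to \infty} 0.
\]
Inspection of the pentadiagonal form \eqref{eq:stdcmvdef} shows that the assignment from Verblunsky sequences in $\ell^\infty(\Z,\D)$ to extended CMV matrices is continuous in operator norm, so $\|\CE^{\omega_n} - \CE^\omega\| \to 0$. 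Coupling this with Lemma~\ref{lem:hausleqop} shows that the spectrum map $\omega \mapsto \sigma(\CE^\omega)$ is Hausdorff-continuous on $\Omega$.

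To finish, recall from the discussion preceding \eqref{eq:lyapdef} that there is a compact set $\Sigma \subset \pD$ with $\sigma(\CE^\omega) = \Sigma$ for $\mu$-a.e.\ $\omega \in \Omega$. In the almost-periodic setting, $\mu$ is the Haar measure on the compact group $\Omega$, which has full topological support; hence $\Omega_0 := \{\omega \in \Omega : \sigma(\CE^\omega) = \Sigma\}$ is dense in $\Omega$. Hausdorff-continuity of $\omega \mapsto \sigma(\CE^\omega)$ then forces $\sigma(\CE^\omega) = \Sigma$ for every $\omega \in \Omega$; in particular, $\sigma(\CE^\omega) = \sigma(\CE^{\omega_0})$. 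The step that requires some care is upgrading the pointwise continuity of $f$ to uniformity in the shift index $j$, which is handled cleanly by fixing a $T$-invariant metric on $\Omega$.
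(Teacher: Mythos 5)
Your proof is correct, but it takes a genuinely different route from the paper's. The paper argues purely topologically: by density of the orbit $\{T^n\omega_0\}$ in $\Omega$ one picks $\omega_j := T^{n_j}\omega_0 \to \omega$; each $\CE^{\omega_j}$ is unitarily equivalent to $\CE^{\omega_0}$ (conjugation by a shift), so all have the same spectrum, and strong convergence $\CE^{\omega_j} \to \CE^{\omega}$ yields the one-sided inclusion $\sigma(\CE^\omega) \subseteq \sigma(\CE^{\omega_0})$, with the reverse inclusion by symmetry. You instead upgrade to operator-norm continuity of $\omega \mapsto \CE^\omega$ (via uniform continuity of $f$ and a $T$-invariant metric, which is the right way to get uniformity in the shift index), deduce Hausdorff continuity of the spectrum map from Lemma~\ref{lem:hausleqop}, and then propagate the a.e.\ identity $\sigma(\CE^\omega) = \Sigma$ from a dense full-measure set to all of $\Omega$. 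Both arguments are sound; all the ingredients you invoke (metrizability and full support of Haar measure for the shift-hull, ergodicity, norm continuity of the Verblunsky-to-CMV map including the $\rho$ entries) do hold in this setting. The trade-off is that your route imports the nontrivial external fact that the spectrum is a.e.\ constant for ergodic families (the ``general considerations'' of the introduction) together with unique ergodicity, whereas the paper's proof is measure-free and in effect \emph{reproves} that constancy for almost-periodic coefficients. A small remark: your norm-continuity and Hausdorff-continuity observations combine even more economically with the paper's orbit-density step --- since $\sigma(\CE^{T^{n_j}\omega_0}) = \sigma(\CE^{\omega_0})$ for every $j$ and $d_\Hd\bigl(\sigma(\CE^{T^{n_j}\omega_0}), \sigma(\CE^{\omega})\bigr) \to 0$, one concludes $\sigma(\CE^\omega) = \sigma(\CE^{\omega_0})$ directly, with no appeal to the ergodic theorem at all.
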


\begin{proof}
Given $\omega \in \Omega$, by almost-periodicity there exists a sequence $(n_j)_{j \in \N}$ such that $T^{n_j}\omega_0 =: \omega_j \to \omega$ as $j \to \infty$, and in particular $\slim_{j \to \infty} \CE^{\omega_j} = \CE^\omega$. Since $\CE^{\omega_j}$ is unitarily equivalent to $\CE^{\omega_0}$ for each $j$, one has
\begin{align*}
\sigma(\CE^{\omega}) 
\subset
\overline{\liminf _{j\to\infty}\sigma(\CE^{\omega_j})} 
=
\sigma(\CE^{\omega_0}).
\end{align*}
The opposite inclusion follows by symmetry.
\end{proof}

\begin{prop}
\label{pr:LPsetdif0}
Let $\CE$ be a limit-periodic CMV matrix with $q_n$-periodic approximants $\CE_n$. Denote $\Sigma = \sigma(\CE)$ and $\Sigma_n = \sigma(\CE_n)$.  If
\begin{align}
\label{eq:LPwellapprox}
\lim_{n \to \infty} q_n \|\CE - \CE_n\| = 0,
\end{align}
then $\Leb(\Sigma\setminus \Sigma_n) \to 0$ as $n \to \infty$.
\end{prop}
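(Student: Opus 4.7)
The plan is to combine the Hausdorff-distance bound of Lemma~\ref{lem:hausleqop} with the fact that a periodic CMV spectrum consists of only polynomially many (in the period) arcs. More precisely, Lemma~\ref{lem:hausleqop} gives
\[
\Sigma \;\subseteq\; B_{\|\CE - \CE_n\|}(\Sigma_n),
\]
so that $\Sigma \setminus \Sigma_n \subseteq B_{\|\CE - \CE_n\|}(\Sigma_n) \setminus \Sigma_n$. The entire content of the proposition is then to bound the Lebesgue measure of the right-hand side in terms of $q_n$ and $\|\CE - \CE_n\|$.

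The key structural input is the classical description of periodic CMV spectra from Floquet theory (see, e.g., \cite[Chapter~11]{S2}): if $\CE_n$ has period $q_n$ (which we may take even, as in the spectral approximation section above), then $\Sigma_n$ is a disjoint union of at most $q_n$ closed arcs in $\pD$. I would first pass to such an even period if needed, then note that for any $\e > 0$, the set $B_\e(\Sigma_n)\setminus \Sigma_n$ is contained in a union of at most $2 q_n$ open arcs, each of length at most $\e$ (one attached at each endpoint of each band). Hence
\[
\Leb\bigl(B_\e(\Sigma_n)\setminus \Sigma_n\bigr) \;\le\; 2 q_n \e.
\]

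Plugging in $\e = \|\CE - \CE_n\|$ and combining with the inclusion above yields
\[
\Leb(\Sigma \setminus \Sigma_n) \;\le\; 2 q_n \|\CE - \CE_n\|,
\]
and the right-hand side tends to $0$ by the hypothesis \eqref{eq:LPwellapprox}. The main (mild) obstacle is the verification that $\Sigma_n$ really does have at most linearly-many-in-$q_n$ bands; I would cite \cite[Thm.~11.2.1]{S2} or the discussion preceding Lemma~\ref{l:monodromyNorm} for this. Apart from that, the argument is a direct composition of an a priori operator-norm spectral perturbation estimate with a counting bound on the topology of periodic spectra, so no harder estimates are required.
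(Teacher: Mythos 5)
Your argument is exactly the paper's proof: both use Lemma~\ref{lem:hausleqop} to get the inclusion $\Sigma \setminus \Sigma_n \subset B_{\e_n}(\Sigma_n)\setminus\Sigma_n$ with $\e_n = \|\CE-\CE_n\|$, then bound the measure by $2q_n\e_n$ using the fact that a $q_n$-periodic CMV spectrum has at most $q_n$ bands. The proposal is correct and takes essentially the same approach.
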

\begin{proof}
Denote by $\e_n := \|\CE - \CE_n\|$.  Then by \eqref{eq:hausleqop} and the definition of the Hausdorff metric, we have
\begin{align}
\label{eq:LPwaproof}
\Sigma \setminus \Sigma_n \subset B_{\e_n}(\Sigma_n) \setminus \Sigma_n.
\end{align}
Because $\Sigma_n$ has at most $q_n$ bands, the right hand side of (\ref{eq:LPwaproof}) has at most $2q_n$ connected components, each of length at most $\e_n$ by \eqref{eq:hausleqop}.  Thus,
\begin{align*}
\Leb(\Sigma \setminus \Sigma_n)
\leq
2q_n\e_n,
\end{align*}
which tends to $0$ as $n \to \infty$ by the assumption \eqref{eq:LPwellapprox}.
\end{proof}

Using the same sort of methods, we can also address the proof of Theorem~\ref{t:LPposmeas}. The first part of the proof is made much easier if we use the ``sieving'' construction. Concretely, given a CMV operator $\CE$, let $\widehat{\CE}$ denote the CMV operator with
\begin{equation} \label{eq:sievedDef}
\widehat{\alpha}_{2j}
=
0,
\quad
\widehat{\alpha}_{2j-1} = \alpha_j,
\quad
j \in \Z.
\end{equation}
This induces a simple change in the spectrum; namely, if $E_2:\pD\to\pD$ denotes the two-fold cover $z\mapsto z^2$, then
\begin{equation} \label{eq:sieveSpec}
\sigma(\widehat{\CE})
=
E_2^{-1}(\sigma(\CE)).
\end{equation}
In other words, one obtains $\sigma(\widehat{\CE})$ by taking two scaled copies of $\sigma(\CE)$ and putting them on the top and bottom halves of $\pD$.  To see this, one can verify by hand that $\widehat\CE^2 \cong \CE \oplus \CE^\top$. The calculation is known to experts, but may not be obvious to the uninitiated, so we will sketch the outline for the reader's convenience. First, let $\widehat\CE = \widehat\CL\widehat\CM$ denote the factorization of $\widehat\CE$ as in \eqref{eq:LM}. Then, straightforward calculations using the definitions yield
\[
\widehat\CL\delta_{2j}
=
\delta_{2j+1},
\quad
\widehat\CL\delta_{2j+1}
=
\delta_{2j}
\]
and
\[
\widehat\CM \delta_{2j-1}
=
\overline{\alpha_j}\delta_{2j-1} + \rho_j \delta_{2j},
\quad
\widehat\CM\delta_{2j}
=
\rho_j \delta_{2j-1} - \alpha_j \delta_{2j}.
\]
Therefore, one can verify that
\begin{align}
\label{eq:sievebasis1}
\widehat\CE^2 \delta_{4n-1}
& =
\overline{\alpha_{2n}} \rho_{2n-1} \delta_{4n-4} - \overline{\alpha_{2n}}\alpha_{2n-1}\delta_{4n-1} +\overline{\alpha_{2n+1}}\rho_{2n} \delta_{4n} + \rho_{2n+1} \rho_{2n}\delta_{4n+3} \\
\label{eq:sievebasis2}
\widehat\CE^2 \delta_{4n}
& =
\rho_{2n} \rho_{2n-1} \delta_{4n-4} - \rho_{2n}\alpha_{2n-1}\delta_{4n-1} - \overline{\alpha_{2n+1}}\alpha_{2n} \delta_{4n} - \rho_{2n+1}\alpha_{2n}\delta_{4n+3} \\
\label{eq:sievebasis3}
\widehat\CE^2 \delta_{4n+1}
& =
\overline{\alpha_{2n+1}} \rho_{2n} \delta_{4n-2} - \overline{\alpha_{2n+1}}\alpha_{2n} \delta_{4n+1} + \overline{\alpha_{2n+2}}\rho_{2n+1} \delta_{4n+2} + \rho_{2n+2}\rho_{2n+1} \delta_{4n+5} \\
\label{eq:sievebasis4}
\widehat\CE^2 \delta_{4n+2}
& =
\rho_{2n+1} \rho_{2n} \delta_{4n-2} - \rho_{2n+1}\alpha_{2n} \delta_{4n+1} - \overline{\alpha_{2n+2}\alpha_{2n+1}} \delta_{4n+2} - \rho_{2n+2}\alpha_{2n+1} \delta_{4n+5}.
\end{align}
Defining subspaces
\begin{align*}
\mathcal{X}
& =
\ell^2(\set{k \in \Z: k \equiv 0 \mod 4 \text{ or } k \equiv 3 \mod 4}) \\
\mathcal{Y}
& =
\ell^2(\set{k \in \Z : k \equiv 1 \mod 4 \text{ or } k \equiv 2 \mod 4}),
\end{align*}
the calculations in Eqs.~\eqref{eq:sievebasis1}--\eqref{eq:sievebasis4} show that $\widehat\CE^2$ leaves $\mathcal{X}$ and $\mathcal{Y}$ invariant and that $\widehat\CE^2|_{\mathcal{X}} \cong \CE$ and $\widehat\CE^2|_{\mathcal{Y}} \cong \CE^\top$. In particular, the claim about the spectrum holds. Moreover, we see that $\CE$ has purely a.c.\ spectrum if and only if $\widehat{\CE}$ has purely a.c.\ spectrum.

Additionally, notice that the Szeg\H{o} matrices (defined in \eqref{eq:szegotmdef}) obey
\begin{equation} \label{eq:sieveTMs}
S(\alpha,z)S(0,z)
=
S(\alpha,z^2).
\end{equation}
In particular, since the spectrum of $\CE$ is given by the closure of the set of $z \in \pD$ at which the Szeg\H{o} recursion enjoys a polynomially bounded solution \cite{DFLY2016DCDS}, \eqref{eq:sieveTMs} also suffices to establish \eqref{eq:sieveSpec}. Beyond that, \eqref{eq:sieveTMs} clearly implies $\widehat{L}(z) = L(z^2)$ for $z \in \pD$, where $\widehat L$ denotes the Lyapunov exponent corresponding to $\widehat{\CE}$. Consequently, since $\Leb(E_2^{-1}(S)) = \Leb(S)$ for every set $S$, one has
\[
\Leb(\Sigma\setminus\CZ)
=
\Leb(\widehat{\Sigma}\setminus\widehat{\CZ}),
\]
where we have used hats to denote the sets associated to the sieved CMV operators. The outcome of this discussion is that it suffices to work with the sieved CMV operators, and hence, one may as well assume that all even Verblunsky coefficients vanish.

This is quite helpful, because one can easily prove operator inequalities in terms of Verblunsky coefficients in the sieved setting. In particular, we have the following lemma:

\begin{lemma} \label{l:CMVdiffEst}
 If $\CE$ and $\CE'$ are CMV operators with coefficient sequences $\alpha$ and $\alpha'$ such that $\alpha_{2j} = \alpha_{2j}' = 0$ for every $j$ and $\|\alpha\|_\infty, \|\alpha'\|_\infty < 1$, then
\[
\|\alpha - \alpha'\|_\infty
\leq
\| \CE - \CE'\|
\leq
C \|\alpha - \alpha'\|_\infty,
\]
where $C$ is a constant that depends only on $\max(\|\alpha\|_\infty, \|\alpha'\|_\infty)$.
\end{lemma}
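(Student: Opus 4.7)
The plan is to exploit the sieving hypothesis (vanishing of all even-indexed Verblunsky coefficients) to reduce $\|\CE - \CE'\|$ to a supremum of norms of uncoupled $2\times 2$ blocks. Writing $\CE = \CL\CM$ as in \eqref{eq:LM}, the hypothesis $\alpha_{2j} = \alpha_{2j}' = 0$ forces each $\Theta(\alpha_{2j})$ and $\Theta(\alpha_{2j}')$ to equal the coordinate-swap on the pair $(2j, 2j+1)$, so $\CL = \CL'$ is a fixed unitary. Consequently $\CE - \CE' = \CL(\CM - \CM')$, and unitarity of $\CL$ yields $\|\CE - \CE'\| = \|\CM - \CM'\|$.

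I would then observe that $\CM - \CM' = \bigoplus_{j \in \Z}\bigl[\Theta(\alpha_{2j+1}) - \Theta(\alpha_{2j+1}')\bigr]$ is block-diagonal in the decomposition into coordinate pairs $(2j+1, 2j+2)$. Since the operator norm of a direct sum equals the supremum of the component norms, the problem reduces to comparing $\|\Theta(\alpha) - \Theta(\beta)\|$ with $|\alpha - \beta|$, uniformly for $|\alpha|, |\beta| \leq M$, where $M := \max(\|\alpha\|_\infty, \|\alpha'\|_\infty) < 1$.

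For the lower bound, note that $(\Theta(\alpha) - \Theta(\beta))_{1,1} = \overline{\alpha} - \overline{\beta}$, and the absolute value of any entry of a matrix is dominated by its operator norm; this gives $|\alpha - \beta| \leq \|\Theta(\alpha) - \Theta(\beta)\|$, and taking the supremum over $j$ yields $\|\alpha - \alpha'\|_\infty \leq \|\CE - \CE'\|$.

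For the upper bound, the only subtle term is the off-diagonal difference $\rho_\alpha - \rho_\beta$. Using the identity
\[
\rho_\alpha - \rho_\beta
=
\frac{|\beta|^2 - |\alpha|^2}{\rho_\alpha + \rho_\beta}
\]
together with $\rho_\alpha, \rho_\beta \geq \sqrt{1-M^2}$ and $\bigl||\alpha|^2 - |\beta|^2\bigr| \leq 2M|\alpha-\beta|$ yields $|\rho_\alpha - \rho_\beta| \leq M(1-M^2)^{-1/2}|\alpha-\beta|$. A crude entrywise norm bound on $\Theta(\alpha) - \Theta(\beta)$ (e.g., via the Frobenius norm) then gives $\|\Theta(\alpha) - \Theta(\beta)\| \leq C(M)|\alpha-\beta|$, completing the upper estimate. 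There is no serious obstacle here; the only potential pitfall is the blow-up of the constant as $M \to 1$, but this is precisely ruled out by the uniform hypothesis $\|\alpha\|_\infty, \|\alpha'\|_\infty < 1$.
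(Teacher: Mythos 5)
Your argument is correct and follows essentially the same route as the paper: the lower bound is obtained exactly as in the paper's proof, by using $\CL = \CL' = \bigoplus_j \Theta(0)$ to reduce $\|\CE-\CE'\|$ to $\sup_j\|\Theta(\alpha_{2j+1})-\Theta(\alpha'_{2j+1})\|$ and then reading off the $(1,1)$ entry. The only difference is that the paper dispatches the upper bound by citing the known estimate in Simon's book (Equation~(4.3.13) of \cite{S2}), whereas you supply a self-contained Lipschitz bound for $\alpha \mapsto \rho_\alpha$ on $\{|\alpha|\le M\}$, which is a correct and slightly more explicit substitute.
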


\begin{proof}
The upper bound is well-known; compare \cite[Equation~(4.3.13)]{S2}. For the lower bound, begin by using \eqref{eq:LM}, to observe that
\[
\CL = \CL'=
\bigoplus_{j \in \Z} \Theta(0),
\]
and thus
\[
\|\CE - \CE'\|
=
\| \CM - \CM'\|
=
\sup_j \|\Theta(\alpha_{2j-1}) - \Theta(\alpha_{2j-1}') \|
\geq
\|\alpha - \alpha'\|_\infty.
\]
\end{proof}

\begin{proof}[Proof of Theorem~\ref{t:LPposmeas}]
There are two claims.  First, we prove that, if \eqref{eq:LPwellapprox} holds, then one has $\Leb(\Sigma \setminus \CZ) = 0$. To that end, let $\CE$ be given with $\alpha$ satisfying \eqref{eq:alphao(q)}, let $\alpha^{(n)}$ denote the coefficients of $\CE_n$, and put $\delta_n = \| \CE_n - \CE\|$; by sieving, we may assume that $\alpha_{2j} = 0$ for every $j$ and that this holds for all the periodic approximants of $\CE$ as well.  Let us begin by defining $\Omega$ to be the shift-hull of $\alpha$, that is
\[
\Omega
=
\overline{\set{S^k\alpha : k \in \Z}},
\]
where $S$ denotes the left shift and the closure is taken in $\ell^\infty$. Then, for each $\omega \in \Omega$, we get a CMV operator $\CE^\omega$ simply by using $\omega$ as a coefficient sequence. 
In particular, $\CE = \CE^{\omega_0}$ with $\omega_0 = \alpha$. 

We wish to emphasize an important but subtle difference between $\CE_n$ and $\widetilde \CE^{\omega_0, q_n}$. These are both $q_n$-periodic approximations of $\CE$, but $\widetilde \CE^{\omega_0, q_n}$ is specifically defined by taking the first $q_n$ terms of the $\alpha$ sequence and then repeating them (see \eqref{eq:widetildealpha}), whereas $\CE_n$ has Verblunsky coefficient sequence $\alpha^{(n)}$.
We then have by Lemma~\ref{l:CMVdiffEst},
\begin{align*} 
\|\CE^{\omega_0} - \widetilde\CE^{\omega_0, q_n}\|
& \leq
C \|\alpha - \widetilde\alpha^{\omega_0,q_n}\|\\
& \leq
C( \|\alpha - \alpha^{(n)}\| + \|\alpha^{(n)} - \widetilde \alpha^{\omega_0, q_n}\| )\\
& \leq
2C \|\alpha - \alpha^{(n)}\|\\
& \leq
2C \delta_n
\end{align*}

Shifting, this holds with $\omega_0$ replaced by any $\omega \in \Omega$. Since $\Sigma^{\omega,q_n}$ has at most $q_n$ connected components, we deduce that
\[
\Leb(\Sigma \setminus \Sigma^{\omega,q_n})
\leq
4 C q_n \delta_n,
\]
and the right-hand side goes to zero as $n\to\infty$ by assumption. It follows that
\[
\Leb\left( \Sigma\setminus \limsup_{k\to\infty} \Sigma^{\omega,k} \right)
=
0
\]
for each $\omega$, and hence $\Leb(\Sigma \setminus \CZ) = 0$ by Theorem~\ref{t:perapprox}.
\medskip

Now we prove second claim: that \eqref{eq:LPsumbd} implies \eqref{eq:alphao(q)} and $\Leb(\Sigma) > 0$. However, the first part is immediate: if \eqref{eq:LPsumbd} holds, then \eqref{eq:alphao(q)} follows, since
\begin{align*}
q_n \|\alpha - \alpha^{(n)}\|
\leq
q_n \|\CE - \CE_n\|
\leq 
q_n \sum_{m = n+1}^\infty \|\CE_m - \CE_{m-1}\|
\leq 
\sum_{m = n+1}^\infty q_m\|\CE_m - \CE_{m-1}\|,
\end{align*}
which is the tail of a convergent series and hence converges to zero as $n\to\infty$, proving \eqref{eq:alphao(q)}.

Finally, we prove that $\Leb(\Sigma) > 0$ under the assumption \eqref{eq:LPsumbd}.  By Lemma \ref{lem:hausleqop} and the limit-periodicity of $\CE$, we have $d_\Hd(\Sigma,\Sigma_n) \to 0$ as $n \to \infty$. Furthermore, by Lemma \ref{lem:limsupcont} and the semicontinuity of measure, we have
\begin{align*}
\Leb(\Sigma) 
\geq 
\Leb(\limsup \Sigma_n) 
\geq 
\limsup_{n \to \infty} \Leb(\Sigma_n),
\end{align*}
so it suffices to show that $\limsup_{n \to \infty} \Leb(\Sigma_n) > 0$.

For each $n$, we have $\Leb(\Sigma_n) \geq \Leb(\Sigma_{n-1}) - \Leb(\Sigma_{n-1} \setminus \Sigma_n)$.  Inductively, one sees
\begin{align}
\label{eq:LPpmtelescope}
\Leb(\Sigma_n) 
\geq 
\Leb(\Sigma_k) 
- \left[\sum_{j=k+1}^n \Leb(\Sigma_{j-1} \setminus \Sigma_j) \right]
\end{align}
for all $n > k$. Following the proof of Proposition \ref{pr:LPsetdif0} (and, in particular, using a suitable version of equation \eqref{eq:LPwaproof}), we find that $\Leb(\Sigma_{j-1} \setminus \Sigma_j) \leq 2q_j \delta_j$. Using this together with Equation~\eqref{eq:LPpmtelescope}, we can conclude that
\begin{align*}
\Leb(\Sigma_n) 
\geq
\Leb(\Sigma_{k})- \sum_{j=k+1}^n 2q_j \delta_j
>
0,
\end{align*}
where we have used the assumption \eqref{eq:LPsumbd}.
\end{proof}

As a concluding remark to this section, we note that the natural analogue of Theorem \ref{t:LPposmeas} holds for Jacobi and Schr\"odinger operators with precisely the same proof.

\section{Well-approximated limit-periodic CMV matrices are reflectionless}

We now apply our Theorem \ref{t:LPposmeas} to prove pure absolute continuity of the spectrum for Pastur-Tkachenko class CMV matrices.

One common thread relating extended CMV matrices to the apparently quite different class of Schr\"odinger and Jacobi operators on the whole line is Weyl-Titchmarsh theory, by which one studies the whole-line operator $\CE$ via its cyclic restrictions $\CE_{\pm,k}$ to the half-lines $\Z_{-,k}:=\Z \cap (-\infty, k]$ and $\Z_{+,k}:=\Z \cap [k,\infty)$ \cite{GZ2006JAT}.  Critical to this study are the Weyl-Titchmarsh coefficients, defined for $z \in \C \setminus \partial\D$ by
\begin{align*}
m_\pm(z,k) &= \pm\langle \delta_k, (\CE_{\pm,k} + zI)(\CE_{\pm,k} - zI)^{-1}\delta_k \rangle_{\ell^2(\Z_{\pm,k})}, \\
M_+(z,k) &= m_+(z,k-1), \\
M_-(z,k) &= \frac{\Re(1-\overline{\alpha_k})+i\Im(1+\overline{\alpha_k})m_-(z,k-2)}{i\Im(1-\overline{\alpha_k}) + \Re(1+\overline{\alpha_k})m_-(z,k-2)}.
\end{align*}
Restricted to the unit disk, the coefficients $\pm M_\pm$ are Caratheodory functions; that is, functions holomorphic from $\D$ to the right half-plane $\{z \in \C : \Re(z) > 0\}$.  Consequently, the functions $M_\pm$ have well-defined radial limits at $e^{i\theta}$ for (Lebesgue) almost-every $\theta \in [0,2\pi)$.  We denote these limits
\begin{align*}
M_\pm(e^{i\theta},k) := \lim_{r \uparrow 1} M_\pm(re^{i\theta},k)
\end{align*}
when they exist.

We say that $\CE$ is \emph{reflectionless} when
\begin{align}
\label{eq:reflless}
\text{ for every } k \in \Z, \; M_+(e^{i\theta},k) &= -\overline{M_-(e^{i\theta},k)} \,\text{ for almost-every } e^{i\theta} \in \sigma(\CE).
\end{align}
By design, the reflectionless condition allows one to construct consistent analytic continuations of the Weyl-Titchmarsh coefficients beyond the unit disk.  This in turn implies a strong determinism between half-line restrictions of the CMV matrix.  The reflectionless property is intimately related to absolute continuity of the spectrum \cite{BRZ2009CMP, GZ2009JDE, REM11} via the following fundamental results:

\begin{theorem}[\mbox{\cite[Theorem~1.4]{BRZ2009CMP}}]
\label{t:brz}
Almost-periodic CMV matrices are reflectionless on $\CZ$.
\end{theorem}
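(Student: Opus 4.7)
The plan is to invoke a Kotani-theoretic argument adapted to the CMV setting. The strategy has two parts: first, establish the reflectionless identity for $\mu$-almost every $\omega \in \Omega$ via Oseledec/subexponential-growth considerations; second, upgrade from ``a.e.'' to ``every $\omega$'' using the minimality of the translation action on the almost-periodic hull.

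For the first part, I would fix $z_0 \in \CZ$ and study the Szeg\H{o} cocycle $S(f(T^n\omega),z_0)$, which takes values in $\SU(1,1)$ (after a suitable normalization, as in Section~10.4 of \cite{S2}). Since $L(z_0)=0$, Oseledec's multiplicative ergodic theorem for the cocycle implies that for $\mu$-a.e.\ $\omega$, the transfer matrices $\Phi_n^\omega(z_0)$ grow subexponentially in $n$ in both directions. By Fubini, for $\mu$-a.e.\ $\omega$ this subexponential bound holds for Lebesgue-a.e.\ $e^{i\theta} \in \CZ$. Using the Gesztesy--Zinchenko formalism recalled in the previous section, this transfer-matrix control gives existence of the radial boundary values $M_\pm^\omega(e^{i\theta},k)$ and, more importantly, forces both the $+$ and $-$ Weyl solutions to be $L^2$-bounded in the half-line sense. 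A Wronskian computation along the lines of \cite{GZ2006JAT}, combined with the conjugation symmetry $\overline{M_\pm(\overline{z}^{-1},k)} = M_\pm(z,k)$ which persists to the unit circle, then yields the reflectionless identity $M_+(e^{i\theta},k) = -\overline{M_-(e^{i\theta},k)}$ for a.e.\ $e^{i\theta} \in \CZ$.

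The second part upgrades $\mu$-a.e.\ to every $\omega$ using the almost-periodic structure. Since $\Omega$ is a compact monothetic group and $T$ is a topological generator, the dynamical system $(\Omega,T)$ is minimal; in particular, for any $\omega \in \Omega$ there is a sequence $n_j$ with $T^{n_j}\omega_0 \to \omega$, where $\omega_0$ is any point for which the preceding step has been established. Since $f$ is measurable and $\alpha$ is almost periodic, $\CE^{T^{n_j}\omega_0} \to \CE^\omega$ in operator norm (and hence the associated Weyl functions converge in a suitable sense). The reflectionless condition is preserved under such limits because each $M_\pm^{T^{n_j}\omega_0}(\cdot,k)$ is a Carath\'eodory function satisfying the identity on $\CZ$, and the set $\CZ$ is invariant under the dynamics. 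Combining these yields the reflectionless property for every $\omega \in \Omega$.

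The main obstacle lies in the second half of the first step: extracting the precise identity $M_+ = -\overline{M_-}$ rather than merely a vague statement that the $m$-functions are boundary-regular. The asymmetry between the two sides of the CMV operator (reflected in the distinction between $m_+(z,k-1)$ and the M\"obius-transformed $m_-(z,k-2)$ appearing in the definitions of $M_\pm$) means that one must very carefully track how $L(z_0)=0$ yields the correct conjugation relation rather than just boundedness. Writing out the Gesztesy--Zinchenko relations and exploiting $|z|=1$ together with $\det$-preservation of the cocycle should deliver the identity, but this bookkeeping is the delicate technical step.
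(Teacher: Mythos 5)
First, a point of order: the paper does not prove this statement at all --- it is imported verbatim as \cite[Theorem~1.4]{BRZ2009CMP}, so there is no in-paper argument to compare against. Your proposal is therefore a reconstruction of the literature proof, and its architecture (Kotani theory to get reflectionlessness on $\CZ$ for $\mu$-a.e.\ $\omega$, then an upgrade to every $\omega$ via minimality of the translation on the almost-periodic hull) is indeed the standard route; the CMV Kotani step is essentially \cite[Theorem~10.11.16]{S2}, and the upgrade is in the spirit of the right-limit machinery of \cite{BRZ2009CMP, REM11}.

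That said, both halves of your sketch gloss over the step that carries the actual mathematical weight. In the first half, subexponential growth of the Szeg\H{o} cocycle at $z_0 \in \CZ$ does \emph{not} force the Weyl solutions to be $\ell^2$ on the half-lines (on the a.c.\ spectrum they generically are not), and no Wronskian bookkeeping starting from that premise will produce $M_+ = -\overline{M_-}$. The genuine mechanism of Kotani theory is different: one integrates a Thouless-type identity expressing $L(z)$ in terms of $\Re\bigl(M_+ + \overline{M_-}\bigr)$ over the hull, so that $L \equiv 0$ on a positive-measure subset of $\pD$ forces the real part of the reflection quantity to vanish a.e.\ there, for a.e.\ $\omega$; this is an expectation identity, not an orbit-wise growth estimate. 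In the second half, the assertion that ``the reflectionless condition is preserved under such limits'' is true but is itself a nontrivial lemma --- the class of CMV matrices reflectionless on a fixed Borel set is closed under strong (hence norm) limits, which requires the characterization of reflectionless Weyl functions via pseudocontinuation and a normal-families argument, since boundary values of Carath\'eodory functions on a null-complement of $\CZ$ do not pass to limits for free. As written, your proposal names the right landmarks but leaves both load-bearing steps as gestures; if you intend to supply a self-contained proof rather than cite \cite{BRZ2009CMP}, these are the two places where the work lives.
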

\begin{theorem}[\mbox{\cite[Theorem~3.5]{GZ2009JDE}}]
\label{t:gz09}
If $\sigma(\CE)$ is homogeneous and $\CE$ is reflectionless thereupon, then $\CE$ has purely a.c.\ spectrum.
\end{theorem}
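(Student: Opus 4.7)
The plan is to follow the strategy of Sodin--Yuditskii \cite{SodYud1997} adapted to the unit circle, as developed by Gesztesy--Zinchenko in the CMV setting. The key idea is that the reflectionless condition \eqref{eq:reflless} produces an analytic continuation of the Weyl--Titchmarsh data across $\sigma(\CE)$, while Carleson homogeneity makes the complement $\hat\C \setminus \sigma(\CE)$ a Widom domain on which the Direct Cauchy Theorem (DCT) holds. Combined, these two inputs force the spectral measure of any cyclic vector to have no singular inner factor, and hence to be purely absolutely continuous.

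The first step would be to assemble $M_+$ and $-\overline{M_-}$ into a single function $F$ meromorphic on $\hat\C \setminus \sigma(\CE)$. Since $M_+(\cdot,k)$ is Caratheodory on $\D$, it extends meromorphically through $\partial\D \setminus \sigma(\CE)$ into the exterior of $\overline\D$ by the symmetry $M(1/\bar z) = -\overline{M(z)}$; the resulting object agrees with the natural extension of $-\overline{M_-(\cdot,k)}$ from its Caratheodory domain of analyticity by \eqref{eq:reflless}, and one pastes them together via a Morera / edge-of-the-wedge argument. The second step is to lift $F$ to the universal cover of $\hat\C \setminus \sigma(\CE)$: by Carleson homogeneity, this cover is of Widom type (the Green function with pole at $\infty$ has critical values in $\ell^1$), so every bounded analytic function on the cover admits an inner-outer factorization with a well-defined character, and the DCT is available in the sense of Hasumi.

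The final step is to use the DCT together with the reflectionless property to kill the singular inner factor of the relevant Caratheodory function. Concretely, one shows that the lift of $F$ is character-automorphic and, using the reflectionless symmetry across $\sigma(\CE)$, that its singular inner factor must coincide with its conjugate; the DCT then forces this common factor to be trivial. Passing back down, the Herglotz representation $d\mu_{+,k} = \tfrac{1}{2\pi}\operatorname{Re} M_+(e^{i\theta},k)\,d\theta + d\mu_{+,k}^{\mathrm{s}}$ for the diagonal spectral measure reveals that $d\mu_{+,k}^{\mathrm{s}} \equiv 0$ on $\sigma(\CE)$, and then cyclicity together with $\sigma(\CE) \supseteq \operatorname{supp} d\mu_{+,k}$ upgrades this to pure absolute continuity of $\CE$ itself.

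The main obstacle is the last step: implementing Widom / DCT technology rigorously in the unit-circle setting. One must track how the covering group's characters transform under the anticonformal symmetry $z \mapsto 1/\bar z$, verify that the reflectionless relation descends to a clean symmetry of the lifted function, and justify that the singular inner factor is actually self-conjugate (and therefore trivial by DCT). A secondary technical nuisance is the possible presence of discrete eigenvalues of the half-line operators in $\hat\C \setminus \sigma(\CE)$, which appear as poles of $F$; these must be handled as removable singularities of the reciprocal, or absorbed into a Blaschke product on the cover, before the inner-outer factorization can be applied.
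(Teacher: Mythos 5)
This theorem is not proved in the paper at all: it is imported verbatim as Theorem~3.5 of \cite{GZ2009JDE}, so there is no internal argument to measure your proposal against. Measured against the cited source, your route is genuinely different. Gesztesy and Zinchenko work directly on the base domain: homogeneity of $\sigma(\CE)$ yields uniform lower bounds on density/harmonic measure at every point of the spectrum, and these are played off against the reflectionless identity $M_+(e^{i\theta},k) = -\overline{M_-(e^{i\theta},k)}$ to show that the Caratheodory functions $M_\pm(\cdot,k)$ have controlled nontangential boundary behavior at \emph{every} point of $\sigma(\CE)$ (not merely a.e.), which precludes the matrix-valued spectral measure from carrying any singular mass on the spectrum. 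No universal cover, no Widom theory, and no Direct Cauchy Theorem appear. What you propose is instead the Sodin--Yuditskii program: lift to the universal cover of the complement, invoke that homogeneous sets are of Parreau--Widom type with DCT, and kill the singular inner factor of a character-automorphic lift. That machinery exists and is what one needs for the far stronger conclusion that reflectionless operators with homogeneous spectrum are almost periodic; for mere pure absolute continuity it is considerably heavier than necessary.

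Beyond the mismatch of routes, there is a genuine gap in your sketch at precisely the point you flag as the main obstacle: the claim that the reflectionless symmetry makes the singular inner factor of the lifted function self-conjugate, and that DCT then forces it to be trivial, is asserted rather than argued. This is the entire content of the theorem in your formulation --- everything before it is bookkeeping --- and it is not a formality: the reflectionless condition is only an a.e.\ statement on $\sigma(\CE)$, while the singular part of the spectral measure lives on a Lebesgue-null subset of $\sigma(\CE)$, so one must explain exactly how homogeneity transfers information from the a.e.\ identity to the null set carrying the singular mass. In the DCT framework this is done through specific integral identities for the resolvent-type functions on the cover, not through a generic ``inner factor equals its conjugate'' symmetry. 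As written, your proposal is a plausible program rather than a proof; if you want an argument at the level of detail you can realistically complete, the harmonic-measure route of \cite{GZ2009JDE} is both shorter and closer to self-contained.
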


We will exploit these results in our proof of Theorem \ref{t:PTac}.  We first prove via our previous results that the a.c. spectrum is full; from there, we will use Theorem \ref{t:brz} to conclude reflectionlessness.  In fact, we have the following broader result, which is itself a straightforward application of Theorem \ref{t:LPposmeas}:

\begin{theorem}
\label{t:LPreflless}
If $\CE$ is a limit-periodic CMV matrix that satisfies the assumption \eqref{eq:LPsumbd} of Theorem~\ref{t:LPposmeas}, then $\Leb(\Sigma)>0$ and $\CE$ is reflectionless on $\Sigma$. 
\end{theorem}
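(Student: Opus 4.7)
The plan is to combine the two prior results already established in this section, namely Theorem~\ref{t:LPposmeas} and Theorem~\ref{t:brz}, with the observation that reflectionlessness is defined up to Lebesgue-null subsets of the spectrum. There are two claims, and both reduce to short deductions.

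First, $\Leb(\Sigma) > 0$ is immediate: this is precisely the final conclusion of Theorem~\ref{t:LPposmeas} under the hypothesis \eqref{eq:LPsumbd}, with no additional work needed. Similarly, Theorem~\ref{t:LPposmeas} also yields $\Leb(\Sigma \setminus \CZ) = 0$, since \eqref{eq:LPsumbd} implies \eqref{eq:alphao(q)} (as shown in the first step of its proof) and hence \eqref{eq:fullZ}.

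For the reflectionless conclusion, I would first observe that a limit-periodic CMV matrix is almost-periodic. Indeed, since $\CE$ is the operator-norm limit of the periodic operators $\CE_n$, the Verblunsky sequence $\alpha$ is a uniform (i.e.\ $\ell^\infty$) limit of periodic sequences, and hence almost-periodic in the sense of Bohr; thus $\CE$ falls within the class to which Theorem~\ref{t:brz} applies. Applying Theorem~\ref{t:brz} then gives that, for every $k \in \Z$, the identity $M_+(e^{i\theta},k) = -\overline{M_-(e^{i\theta},k)}$ holds for Lebesgue-almost every $e^{i\theta} \in \CZ$.

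Finally, combining this with $\Leb(\Sigma \setminus \CZ) = 0$, we conclude that for every $k \in \Z$ the same identity holds for Lebesgue-almost every $e^{i\theta} \in \Sigma$, which is exactly the definition \eqref{eq:reflless} of reflectionlessness on $\Sigma$. No real obstacle arises: the two input theorems were designed precisely to interlock in this way, and the only subtlety worth flagging is the verification that limit-periodic CMV matrices are almost-periodic so that Theorem~\ref{t:brz} is applicable.
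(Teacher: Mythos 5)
Your proposal is correct and follows essentially the same route as the paper: invoke Theorem~\ref{t:LPposmeas} for $\Leb(\Sigma)>0$ and $\Leb(\Sigma\setminus\CZ)=0$, then Theorem~\ref{t:brz} for reflectionlessness on $\CZ$, and pass to $\Sigma$ using the null-set statement. The paper's version is terser (it leaves the almost-periodicity of limit-periodic operators and the $\CZ$-to-$\Sigma$ upgrade implicit), but you have simply made those same steps explicit.
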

\begin{proof}
By Theorem \ref{t:LPposmeas}, all that remains to be shown is that $\CE$ is reflectionless.  But this follows immediately from Theorem \ref{t:brz}.
\end{proof}

Theorem \ref{t:LPreflless} allows us to conclude that Pastur-Tkachenko class CMV matrices are reflectionless on their spectrum.  Previous spectral estimates from \cite{FilLuk2017JST} will provide that the spectrum is homogeneous, and we will finally appeal to Theorem \ref{t:gz09} to conclude purely a.c. spectrum.

We are now in position for the

\begin{proof}[Proof of Theorem \ref{t:PTac}]
PT class implies homogeneity of the spectrum by \cite{FilLuk2017JST}, and the rate function satisfies \eqref{eq:LPwellapprox} almost trivially. Thus, Theorem \ref{t:gz09} applies.
\end{proof}


\begin{thebibliography}{00}
\bibitem{Ambainis}A.\ Ambainis, Quantum walk algorithm for element distinctness, \textit{SIAM Journal on Computing},\ \textbf{37} No. 1 (2007), 210-239.

\bibitem{AK98} J.\ Asch, A.\ Knauf, Motion in periodic potentials, \textit{Nonlinearity} \textbf{11} (1998) 175--200.

\bibitem{Avila2009CMP} A.\ Avila, On the spectrum and Lyapunov exponent of limit periodic Schr\"odinger operators, \textit{Comm.\ Math.\ Phys.}\ \textbf{288} (2008) 907--918.

\bibitem{BRZ2009CMP} J.\ Breuer, E.\ Ryckman, M.\ Zinchenko, Right limits and reflectionless measures for CMV matrices, \textit{Comm.\ Math.\ Phys.}\ \textbf{292} (2009) 1--28.

\bibitem{CGMV} M.-J.\ Cantero, A.\ Gr\"unbaum, L.\ Moral, L.\ Vel\'azquez, Matrix-valued Szeg\H{o} polynomials and quantum random walks, \textit{Comm.\ Pure Appl.\ Math.}\ \textbf{63} (2010), 464--507.

\bibitem{Childs} A.\ M.\ Childs, Universal computation by quantum walk, \textit{Phys. Rev. Lett.}, \textbf{108} No. 18 (2009), 180501 

\bibitem{DamFilLuk2042JST} D.\ Damanik, J.\ Fillman, M.\ Lukic, Limit-periodic continuum Schr\"odinger operators with zero-measure Cantor spectrum, \textit{J.\ Spectral Th.}, in press.

\bibitem{DFLY2015IMRN} D.\ Damanik, J.\ Fillman, M.\ Lukic, W.\ Yessen, Uniform hyperbolicity for Szegő cocycles and applications to random CMV matrices and the Ising model, \textit{Int.\ Math.\ Res.\ Not.}\ \textbf{2015} (2015), 7110--7129.

\bibitem{DFLY2016DCDS} D.\ Damanik, J.\ Fillman, M.\ Lukic, W.\ Yessen, Characterizations of uniform hyperbolicity and spectra of CMV matrices, \textit{Discrete and Continuous Dynamical Systems -- Series S} \textbf{9} (2016), 1009--1023.

\bibitem{DamGor2016AdvM} D.\ Damanik, A.\ Gorodetski, An extension of the Kunz-Souillard approach to localization in one dimension and applications to almost-periodic Schr\"odinger operators, \textit{Adv.\ Math.}\ \textbf{297} (2016), 149--173.

\bibitem{DLY2015CMP} D.\ Damanik, M.\ Lukic, W.\ Yessen,  Quantum dynamics of periodic and limit-periodic Jacobi and block Jacobi matrices with applications to some quantum many body problems, \textit{Commun.\ Math.\ Phys.}\ \textbf{337} (2015), 1535--1561.

\bibitem{DMY2013JSP} D.\ Damanik, P.\ Munger, W.\ Yessen, Orthogonal polynomials on the unit circle with Fibonacci Verblunsky coefficients, II. Applications, \textit{J.\ Stat.\ Phys.}\ \textbf{153} (2013), 339--362.

\bibitem{Egorova1987} I.\ E.\ Egorova, Spectral analysis of Jacobi limit-periodic matrices, \textit{Dokl.\ Akad.\ Nauk Ukrain.\ SSR Ser.\ A} \textbf{3} (1987), 7--9. (in Russian)

\bibitem{F2017CMP} J.\ Fillman, Ballistic transport for limit-periodic Jacobi matrices with applications to quantum many-body problems, \textit{Commun.\ Math.\ Phys.}\ \textbf{350} (2017), 1275--1297.

\bibitem{FilLuk2017JST} J.\ Fillman, M.\ Lukic, Spectral homogeneity of limit-periodic Schr\"odinger operators, \textit{J.\ Spectral Th.}\ \textbf{7} (2017), 387--406.

\bibitem{FilOng2017JFA} J.\ Fillman, D.C.\ Ong, Purely singular continuous spectrum for limit-periodic CMV operators with applications to quantum walks, \textit{J.\ Funct.\ Anal.}\ \textbf{272} (2017), 5107--5143.

\bibitem{Gan2010} Z.\ Gan, An exposition of the connection between limit-periodic potentials and profinite groups, \textit{Math.\ Model.\ Nat.\ Phenom.}\ \textbf{5:4} (2010), 158--174.

\bibitem{GZ2006JAT} F.\ Gesztesy, M.\ Zinchenko, Weyl-Titchmarsh theory for CMV operators associated with orthogonal polynomials on the unit circle \textit{Journal of Approximation Theory} \textbf{139} (2006) 172--213.

\bibitem{GZ2009JDE} F.\ Gesztesy, M.\ Zinchenko, Local spectral properties of reflectionless Jacobi, CMV, and Schr\"odinger operators, \textit{J.\ Diff.\ Eq.}\ \textbf{246} (2009), 78--107.

\bibitem{Kachkovskiy2016CMP} I.\ Kachkovskiy, On transport properties of isotropic quasiperiodic $XY$ spin chains, \textit{Commun.\ Math.\ Phys.}, \textbf{345} (2016), 659--673.

\bibitem{Kato} T.\ Kato, \textit{Perturbation Theory for Linear Operators}, 2nd ed. Grundlehren der mathematischen Wissenschaften~\textbf{132}, Springer-Verlag, New York, 1976.

\bibitem{Last1992CMP} Y.\ Last, On the measure of gaps and spectra for discrete $1$D Schr\"odinger operators, \textit{Commun.\ Math.\ Phys.}\ \textbf{149} (1992), 347--360.

\bibitem{Last1993CMP} Y.\ Last, A relation between a.c.\ spectrum of ergodic Jacobi matrices and the spectra of periodic approximants, \textit{Commun.\ Math.\ Phys.}\ \textbf{151} (1993), 183--192.

\bibitem{Last1994CMP} Y.\ Last, Zero measure spectrum for the almost Mathieu operator, \textit{Commun.\ Math.\ Phys.}\ \textbf{164} (1994), 421--432.

\bibitem{Ong12} D.C.\ Ong, Limit-periodic Verblunsky coefficients for orthogonal
polynomials on the unit circle, \textit{J. Math. Anal. Appl.} \ \textbf{394} (2012) 633--644.

\bibitem{PasTka1984} L.\ Pastur, V.\ A.\ Tkachenko, On the spectral theory of the one-dimensional Schr\"odinger operator with limit-periodic potential (Russian), \textit{Dokl.\ Akad.\ Nauk SSSR} \textbf{279} (1984) 1050--1053.

\bibitem{PasTka1988}  L.\ Pastur, V.\ A.\ Tkachenko, Spectral theory of a class of one-dimensional Schr\"odinger operators with limit-periodic potentials (Russian), \textit{Trudy Moskov.\ Mat.\ Obshch.}\ \textbf{51} (1988) 114--168.

\bibitem{REM11} Christian Remling, The absolutely continuous spectrum of {J}acobi matrices, {\em Ann. of Math. (2)}, \textbf{174} (2011) 125--171.

\bibitem{S1} B.\ Simon, \textit{Orthogonal Polynomials on the Unit Circle. Part~1. Classical Theory}, Colloquium Publications, 54, American Mathematical Society, Providence (2005).

\bibitem{S2} B.\ Simon, \textit{Orthogonal Polynomials on the Unit Circle. Part~2. Spectral Theory}, Colloquium Publications, 54, American Mathematical Society, Providence (2005).

\bibitem{SKW} N.\ Shenvi, J.\ Kempe, and R.\ B.\ Whaley, A quantum walk search algorithm, \textit{Phys. Rev. A}\ \textbf{67} No. 5 (2003), 052307.

\bibitem{QWcomp} Venegas-Andraca, S.E., Quantum walks: a comprehensive review. Quantum Information Processing, 11.5 (2012), 1015-1106.

\bibitem{Venegas-Andraca} Venegas-Andraca, S.E., "Quantum walks for computer scientists." Synthesis Lectures on Quantum Computing 1.1 (2008), 1-119, Morgan \& Claypool

\bibitem{Zhao2017JDE} Z.\ Zhao, Ballistic transport in one-dimensional quasi-periodic continuous Schr\"odinger equation, \textit{J.\ Diff.\ Eq.}\ \textbf{262} (2017), 4523--4566.

\bibitem{ZhangZhao2017CMP} Z.\ Zhang, Z.\ Zhao, Ballistic Transport and Absolute Continuity of One-Frequency Schr\"odinger Operators, \textit{Commun.\ Math.\ Phys.}\ \textbf{351} (2017), 877--921.

\end{thebibliography}
\end{document}